\documentclass[letterpaper, 10 pt, conference]{IEEEtran}
\IEEEoverridecommandlockouts

\usepackage{geometry}
\geometry{verbose,tmargin=54pt,bmargin=54pt,lmargin=54pt,rmargin=54pt,headsep=72pt}

\usepackage{graphics} 
\usepackage{graphicx}

\usepackage{amsmath, amsfonts, amssymb, amsthm} 
\usepackage[mathscr]{euscript}
\usepackage{xcolor} 
\usepackage{etoolbox}

\usepackage{caption}
\usepackage{subcaption}
\captionsetup{font=footnotesize}

\usepackage[shortlabels]{enumitem}

\newtoggle{sepsections}
\togglefalse{sepsections}  

\newcommand{\cp}{\iftoggle{sepsections}{\clearpage}{}}

\newcommand{\mb}[1]{\mathbf{#1}}
\newcommand{\norm}[1]{\left\lVert #1 \right\rVert}

\newcommand{\E}[2][]{\mathbb{E}_{#1}\left[#2\right]}

\newcommand{\R}{\mathbb{R}}
\newcommand{\partialto}{\rightharpoonup}

\DeclareMathOperator*{\argmax}{argmax}
\renewcommand{\argmax}{\operatornamewithlimits{argmax}}


\newtheorem{theorem}{Theorem}
\newtheorem{lemma}{Lemma}

\theoremstyle{definition}
\newtheorem{definition}{Definition}
\newtheorem{corollary}{Corollary}
\newtheorem*{remark}{Remark}

\newcommand{\newsec}[1]{\vspace{2mm} \noindent \textbf{#1.} }


\newcommand{\dtwo}{\eta}
\newcommand{\ellp}{{}}

\usepackage{hyperref} 

\title{\LARGE \bf Input-to-State Stability in Probability}

\author{Preston Culbertson$^{1}$, Ryan K. Cosner$^{1}$, Maegan Tucker$^{1}$, and Aaron D. Ames$^{1,2}$
\thanks{This research was supported by the National Science Foundation (CPS Award \#1932091), BP, and the Zeitlin Family Fund}%
\thanks{$^{1}$ Authors are with the Department
of Mechanical and Civil Engineering, California Institute of Technology,
Pasadena, CA 91125, USA; \texttt{\{pculbert, rkcosner, mtucker, ames\}@caltech.edu}}.%
\thanks{$^{2}$ Authors are with the Department of Control and Dynamical Systems, California Institute of Technology,
Pasadena, CA 91125, USA.}

}

\begin{document}
\maketitle

\begin{abstract}

Input-to-State Stability (ISS) is fundamental in mathematically quantifying how stability degrades in the presence of bounded disturbances. If a system is ISS, its trajectories will remain bounded, and will converge to a neighborhood of an equilibrium of the undisturbed system. This graceful degradation of stability in the presence of disturbances describes a variety of real-world control implementations.  Despite its utility, this property requires the disturbance to be bounded and provides invariance and stability guarantees only with respect to this worst-case bound. In this work, we introduce the concept of ``ISS in probability (ISSp)'' which generalizes ISS to discrete-time systems subject to unbounded stochastic disturbances. Using tools from martingale theory, we provide Lyapunov conditions for a system to be exponentially ISSp, and connect ISSp to stochastic stability conditions found in literature. We exemplify the utility of this method through its application to a bipedal robot confronted with step heights sampled from a truncated Gaussian distribution. 


\end{abstract}

\section{Introduction}


Control systems operating in practice are nearly always affected by disturbances, be they noise, modelling error, uncertain state estimates, or environmental interactions.  This motivates the design of controllers which are robust to these uncertainties. Input-to-state stability (ISS) \cite{sontag_smooth_1989, sontag_characterizations_1995} is a useful heuristic for the robustness of a control system. If a system is ISS, then, loosely, when the system is subjected to bounded disturbances, the system state will converge to some ball whose radius scales with the maximum disturbance norm; in the presence of zero disturbances, the system is asymptotically stable. ISS can be interpreted as guaranteeing the ``graceful degredation'' of asymptotic stability under bounded disturbances; bounded disturbance inputs still produce bounded state trajectories, and asymptotic stability is recovered as the input magnitude approaches zero.

However, as a robustness property, ISS suffers some drawbacks, particularly when reasoning about systems subject to stochastic disturbances. The central issue is that ISS reasons only about bounded disturbances, i.e., those whose norm is upper-bounded. However, many noise sources are more naturally modeled as continuous, unbounded random variables (e.g., systems subject to additive Gaussian noise); ISS-based tools cannot handle such cases. Further, the guarantees provided by ISS depend on the worst-case disturbance magnitude and are thus often highly conservative. 

To remedy these limitations, in this paper we introduce a generalization of the ISS property for discrete-time systems subject to unbounded stochastic disturbances: \textit{input-to-state stability in probability (ISSp)}. Intuitively, we say a system is ISSp if the typical ISS condition can hold with a probability arbitrarily close to one for an arbitrary (but finite) horizon. Using tools from martingale theory, we provide Lyapunov conditions for the exponential form of ISSp. We also explore connections between ISS, ISSp, and more traditional stability notions for stochastic systems. We conclude with simulation studies of ISSp systems subject to unbounded disturbances, including a double-integrator subject to additive Gaussian noise and a bipedal robot walking on uncertain terrain as illustrated in Fig. \ref{fig:walker-config}. In particular, we show that our ISSp-based exit probability bound is indeed conservative for all examples, and show that we can provide non-trivial probabilistic stability guarantees for the biped over a larger disturbance set than the worst-case ISS bound \cite{tucker_input--state_2023}.

\begin{figure}[tb]
    \centering
    \includegraphics[width=0.9\linewidth]{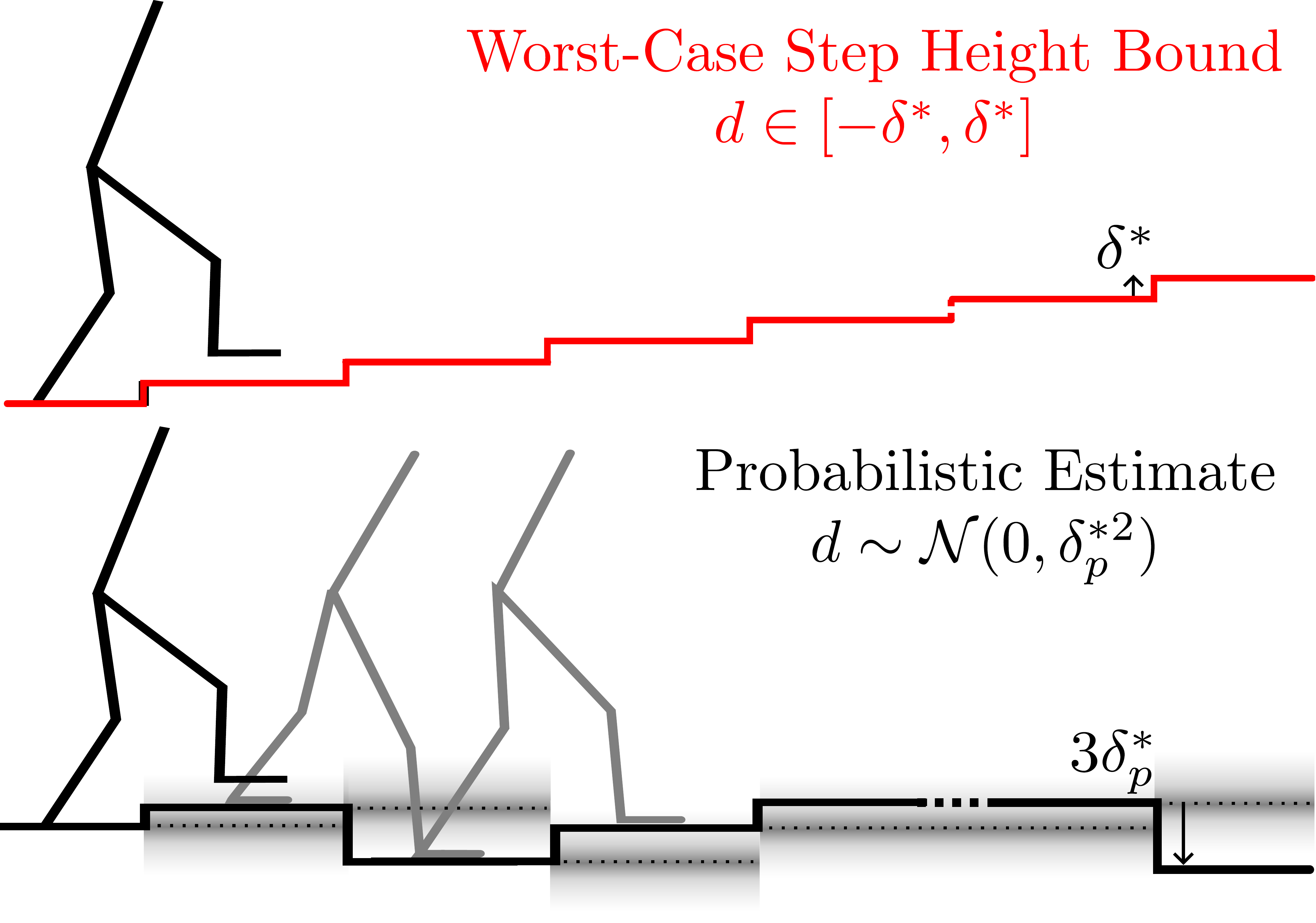}
    \caption{This paper introduces input-to-state stability in probability (ISSp),  which generalizes input-to-state stability (ISS) to systems with unbounded disturbances. We use this framework to study a seven-link walker traversing stochastic terrain.  While ISSp only provides probabilistic guarantees, we find our framework yields more reasonable estimates for the tolerable set of step heights. \textbf{(Top):} ISS-based guarantees must hold for any (bounded) disturbance signal; even for worst-case terrain (e.g., stairs) the walker must be able to remain stable. \textbf{(Bottom):} ISSp reasons instead about how systems behave over finite horizons. While the stochastic step heights (shown as gradients) can sometimes be large, their distribution is concentrated near zero, and thus the walker has a high probability of remaining upright.}
    \label{fig:walker-config}
    \vspace{-6mm}
\end{figure}


There has been a large body of work on ISS. Beginning with the seminal papers by Sontag \cite{sontag_characterizations_1995, sontag_input--state_1995} 
for continuous time and the extension of these results to discrete time \cite{jiang_input--state_2001}, ISS has found utility in the fields of control theory \cite{nesic_dragan_input--state_nodate} robotics \cite{angeli_characterization_2000}, and of special note to this paper, robotic walking \cite{ma2017bipedal,kolathaya2018input,tucker2023input}.  This paper leverages results on martingale theory \cite{kushner_stochastic_1967} to extend (discrete-time) ISS to stochastic systems.  In particular, the stochastic notions presented in this paper are similar to the set-invariance notions in \cite{steinhardt_finite-time_2012,santoyo_verification_2019}, but differ in that they add a notion of stochastic stability to reflect the convergence present in systems that are ISS. Theoretically, this work uses similar tools to those presented in \cite{cosner_robust_2023}, which uses a similar supermartingale to bound the finite-time exit probability of a system from a particular set. However, in this work, we provide a novel “ISS-like” interpretation of this supermartingale, and make explicit connections between the concepts of ISS and other stochastic stability notions found in literature such as variable drift \cite{lengler2020drift}, recurrence \cite{meyn1993markov}, and boundedness of trajectories in probability \cite{steinhardt_finite-time_2012}.

\cp 
\section{Background}

Consider a discrete-time autonomous system,
\begin{align}
    \mb{x}_{k+1} = \mb{f}(\mb{x}_k, \mb{d}_k) \label{eq:dt_dynamics}
\end{align}
with $k \in \mathbb{N}_{\geq 0}$, state $\mb{x}_k \in \mathcal{X} \subseteq \mathbb{R}^n$, equilibrium point $(\mb{x}^*, \mb{d}^*) = (\mb{0}, \mb{0})$,  random disturbance $\mb{d}_k \in \mathbb{R}^d$, and continuous dynamics $\mb{f}: \mathcal{X} \times \R^d \to \R^n$. We assume each disturbance $\mb{d}_k \overset{\text{i.i.d.}}{\sim} \mathcal{D}$ from some disturbance distribution $\mathcal{D}.$

\subsection{Input-to-State Stability for Deterministic Systems}


If the disturbance distribution for system \eqref{eq:dt_dynamics} is bounded, then we can use the concept of Input-to-State Stability (ISS) to reason about the boundedness and convergence of the system's trajectories. 

\begin{definition}[Input-to-State Stability \cite{jiang_input--state_2001}]
    The system \eqref{eq:dt_dynamics} is input-to-state stable (ISS) if there exist functions\footnote{A continuous function $\gamma:[0, a) \to [0, \infty)$ for $a > 0$ is said to belong to class $\mathcal{K}$  ($\gamma \in \mathcal{K}$) if it is strictly monotonically increasing and $\gamma(0) = 0 $. If additionally $a = \infty$ and $\gamma(r) \to \infty $ as $r \to \infty$ then $\gamma$ belongs to $\mathcal{K}_\infty$. A continuous function $\beta : [0, a) \times [0, \infty)  \to [0, \infty)$ is said to belong to class $\mathcal{KL}$ if for each fixed $s\geq 0$  the function $\beta( \cdot , s) $ is class $\mathcal{K}$ and for each $r \geq 0 $ the function $\beta(r, \cdot) $ is decreasing and $\beta(r, s) \to 0  $ as $s \to \infty$. } $\beta \in \mathcal{KL}$  and $\gamma \in \mathcal{K}$ such that, for each deterministic disturbance input $\mb{d}_k \in \R^m$ and each $\mb{x}_0 \in \R^n $, it holds that
    \begin{align}
        \Vert \mb{x}_k \Vert_{\ellp} \leq \beta (\Vert \mb{x}_0 \Vert_{\ellp}, k) + \gamma \left( \sup_{k \in \mathbb{N}_{\geq 0 }} \Vert \mb{d}_k\Vert_{\ellp} \right) \tag{ISS} \label{eq:ISS}
    \end{align}
    for each $k \in \mathbb{N}_\geq{0}$ and some $p \geq 1  $. 
\end{definition}

Intuitively, the bound on the state trajectory is a function of a sequence which converges to zero in time, $\beta(\Vert \mb{x}_0 \Vert_{\ellp}, k) $ and a term which grows with respect to the disturbance bound, $\gamma \left( \sup_{k \in \mathbb{N}_{\geq 0}} \Vert \mb{d}_k \Vert_{\ellp} \right)$. If $\norm{\mb{d}_k}_{\ellp} = 0$ for all $k$, then ISS systems are asymptotically stable. Note that a similar inequality regarding an essentially bounded disturbance distribution $\mathcal{D}$:
    \begin{align}
        \Vert \mb{x}_k \Vert_{\ellp} \leq \beta (\Vert \mb{x}_0 \Vert_{\ellp}, k) + \gamma \left(\textrm{ess sup} \Vert \mathcal{D}\Vert \right), 
    \end{align}
can be employed to achieve ISS almost surely (Corollary \ref{cor:iss-infty}) where $\textrm{ess sup}$ is the essential supremum of the distribution $\mathcal{D}$, also written as the $L^\infty$-norm of $\mathcal{D}$. 

We now introduce ISS-Lyapunov functions as tools for verifying a system's ISS property.

\begin{definition}[ISS-Lyapunov Function \cite{jiang_input--state_2001}]
    A continuous function $V: \R^n \to \R_{\geq 0} $ is an ISS Lyapunov function for \eqref{eq:dt_dynamics} if there exist $\kappa_1, \kappa_2, \kappa_3 \in \mathcal{K}_\infty $ and $\kappa_4 \in \mathcal{K}$ such that: 
        \begin{align}
            \kappa_1(\Vert \mb{x}\Vert_{\ellp}) \leq V(\mb{x}) & \leq \kappa_2 (\Vert \mb{x} \Vert_{\ellp})\\
            V(\mb{f}(\mb{x}, \mb{d})) - V(\mb{x}) & \leq - \kappa_3(\Vert \mb{x} \Vert_{\ellp}) + \kappa_4 (\Vert \mb{d} \Vert_{\ellp}) \label{eq:iss_lyap}
        \end{align}
        for all $\mb{x} \in \R^n$ and all $\mb{d} \in \R^d$. Additionally, $V$ is an exponential-ISS (E-ISS) Lyapunov function if there exist constants $a, b, c > 0$ and $\alpha \in (0,1) $ such that $\kappa_1(r) = a r^c, \kappa_2(r) = br^c, \kappa_3(r) = \alpha r^c $,
\end{definition}

The existence of an ISS-Lyapunov function can now be immediately used to verify that the system is ISS. 

\begin{theorem}[\cite{jiang_input--state_2001}]
    If there exists an ISS Lyapunov Function for system \eqref{eq:dt_dynamics}, then system \eqref{eq:dt_dynamics} is ISS. 
\end{theorem}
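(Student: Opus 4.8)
The plan is to collapse the $n$-dimensional recursion onto a scalar comparison inequality for the Lyapunov value $v_k := V(\mb{x}_k)$ and then extract the $\mathcal{KL}$-plus-$\mathcal{K}$ bound from it. First I would use the sandwich bounds to remove the state norm from the dissipation inequality \eqref{eq:iss_lyap}: since $\kappa_2 \in \mathcal{K}_\infty$ is invertible, $V(\mb{x}) \le \kappa_2(\norm{\mb{x}})$ gives $\norm{\mb{x}_k} \ge \kappa_2^{-1}(v_k)$, hence $\kappa_3(\norm{\mb{x}_k}) \ge \alpha(v_k)$ where $\alpha := \kappa_3 \circ \kappa_2^{-1} \in \mathcal{K}_\infty$. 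Writing $u := \sup_{k \geq 0} \norm{\mb{d}_k}$ and evaluating \eqref{eq:iss_lyap} along a trajectory yields the scalar recursion $v_{k+1} \le v_k - \alpha(v_k) + \kappa_4(u)$. If $\alpha(s) > s$ for some $s$, I replace $\alpha$ by $\min\{\alpha(s), s\}$, which is again $\mathcal{K}_\infty$ (the minimum of two strictly increasing functions is strictly increasing) and keeps the right-hand side nonnegative.

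Next I would introduce a gain margin to split off the disturbance. Fix $\theta \in (0,1)$ and set $\chi := \alpha^{-1} \circ (\kappa_4/\theta) \in \mathcal{K}$. Whenever $v_k \ge \chi(u)$ we have $\kappa_4(u) \le \theta \alpha(v_k)$, so the recursion sharpens to the clean decrease $v_{k+1} \le v_k - (1-\theta)\alpha(v_k)$, whereas for $v_k < \chi(u)$ the one-step growth is at most $\kappa_4(u)$. This is the standard implication form of the Lyapunov decrease: outside the sublevel set $\{v \le \chi(u)\}$ the value strictly decreases by a state-dependent amount, and inside it the value cannot grow by more than $\kappa_4(u)$.

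From here the argument splits into boundedness and convergence. For boundedness I would show, by induction on $k$, that $v_k \le \max\{v_0,\, \chi(u) + \kappa_4(u)\}$: the decrease case gives $v_{k+1} \le v_k$, and the interior case gives $v_{k+1} < \chi(u) + \kappa_4(u)$. For convergence I would observe that while $v_k \ge \chi(u)$ the decrement is at least the fixed positive amount $(1-\theta)\alpha(\chi(u))$, so the trajectory reaches $\{v \le \chi(u) + \kappa_4(u)\}$ in finitely many steps and is trapped there afterward. Assembling the decay of the first phase with the residual level of the second produces $\beta_0 \in \mathcal{KL}$ and $\gamma_0 \in \mathcal{K}$ with $v_k \le \beta_0(v_0,k) + \gamma_0(u)$. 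Converting back to the state norm via $\norm{\mb{x}_k} \le \kappa_1^{-1}(v_k)$ and $v_0 \le \kappa_2(\norm{\mb{x}_0})$, together with the elementary inequality $\kappa_1^{-1}(a+b) \le \kappa_1^{-1}(2a) + \kappa_1^{-1}(2b)$, gives the claimed ISS estimate with $\beta(r,k) := \kappa_1^{-1}(2\beta_0(\kappa_2(r),k))$ and $\gamma(s) := \kappa_1^{-1}(2\gamma_0(s))$.

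I expect the construction of $\beta_0 \in \mathcal{KL}$ to be the main obstacle: making the ``finite-time entry followed by decay'' argument precise and uniform in both $v_0$ and $u$ is exactly the content of the discrete-time comparison lemma of \cite{jiang_input--state_2001}, and the difficulty is that the per-step decrement $(1-\theta)\alpha(v_k)$ is bounded below only by a state-dependent quantity rather than by a uniform contraction factor, so $\beta_0$ cannot simply be taken geometric. I would either invoke that comparison lemma directly or build $\beta_0$ by hand. The remaining steps — the gain-margin split, the inductive boundedness bound, and the conversion between the max/implication form and the additive ISS estimate — are routine bookkeeping that cost only constant factors in the comparison functions.
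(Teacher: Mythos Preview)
The paper does not supply its own proof of this theorem; it is stated as a background result and attributed to \cite{jiang_input--state_2001}. Your sketch is essentially the standard argument from that reference: collapse the vector dissipation inequality to the scalar comparison $v_{k+1} \le v_k - \alpha(v_k) + \kappa_4(u)$ with $\alpha = \kappa_3 \circ \kappa_2^{-1}$, use a gain margin $\chi = \alpha^{-1}(\kappa_4/\theta)$ to split into a strict-decrease region and a bounded residual region, and then invoke the discrete comparison lemma to manufacture $\beta_0 \in \mathcal{KL}$. The outline is correct, and you have accurately flagged the only genuinely non-routine step --- building $\beta_0$ uniformly in $(v_0,u)$ when the per-step decrement is merely state-dependent rather than a fixed contraction --- as the place where the Jiang--Wang comparison lemma carries the load. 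The remaining manipulations (replacing $\alpha$ by $\min\{\alpha,\mathrm{id}\}$, the inductive boundedness, the $\kappa_1^{-1}(a+b) \le \kappa_1^{-1}(2a)+\kappa_1^{-1}(2b)$ split) are standard and correct.
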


\subsection{Stochastic Preliminaries: $L^p$ spaces and Martingales}

Here we provide a brief discussion of random variables, martingales, and other tools that we will use to generalize ISS to the case of unbounded, stochastic disturbances. We will present this material at a level necessary to communicate these concepts clearly and accessibly. We refer readers to \cite{grimmett_probability_2020} for a precise measure-theoretic presentation of these ideas.

In this paper we consider disturbance signals which are sequences of random variables. A continuous random variable $\mb{y}$ sampled from a distribution $\mathcal{Y}$ (denoted $\mb{y} \sim \mathcal{Y}$) is a quantity that takes on values in $\R^{y}$ according to a probability density $p(\mb{y}) \geq 0$, with $\mathbb{P} \{ \mb{y} \in A \} \triangleq \int_A p(\boldsymbol{\upsilon}) d \boldsymbol{\upsilon} $. By definition $\int_{\R^{y}}p(\boldsymbol{\upsilon}) d \boldsymbol{\upsilon} = 1$ and the expectation of a random variable is given by $\E{\mb{y}} \triangleq \int_{\R^{y}} \boldsymbol{\upsilon} p(\boldsymbol{\upsilon}) d \boldsymbol{\upsilon}$.


We now introduce $L^p$ spaces of random variables.  

\begin{definition}[$L^p$ Space \cite{wheeden_measure_2015}]
    A random variable $\mb{y}\sim \mathcal{Y}$ belongs to $L^p$ (denoted as $\mb{y} \in L^p$), for $p > 0$, if
    \begin{align}
        \norm{\mb{y}}_{L^p} \triangleq \E{\norm{\mb{x}}^p}^\frac{1}{p} < \infty.
    \end{align}
\end{definition}
\noindent We call $\norm{\cdot}_{L^p}$ the $p$-norm of a random variable, which is finite for any random variable in $L^p.$ Intuitively, for $0 < p \leq q$, $L^q \subseteq L^p$ \cite[Thm. 8.2]{wheeden_measure_2015} since random variables in $L^q $ have tails that decay faster than those in $L^p$; additionally, $L^\infty$ is the smallest $L^p$ space and only contains random variables that are essentially bounded. Note that any norm $\norm{\cdot}$ appearing without a subscript defines a typical norm on $\mathbb{R}^n$. 
    
We can also reason about a random variable's conditional probability, i.e., its distribution given that another random variable has taken on a particular value. For two random variables $X, Y$ the density of $X$ given $Y=y$ is given by
\begin{align*}
    p(x \mid y) = \frac{p(x, y)}{p(y)} 
\end{align*}
where $p(x, y)$ is the joint probability density of $X, Y$. The conditional expectation of $X$ given $Y=y$ is $\E{X \mid Y}.$

The key tool used to reason about Lypaunov functions for our probabilisitc notion of ISS is a \textit{nonnegative supermartingale}, a specific type of expectation-governed random process: 

\begin{definition}
Let $\mb{x}_k$ be a sequence of random variables that take values in $\R^n$, $W:\mathcal{X}\times\mathbb{N}_{\geq 0}\to\R$, and suppose that $W(\mb{x}_k, k) \in L^1$
for $k\in \mathbb{N}_{\geq 0}$. The process $W_k\triangleq W(\mb{x}_k,k)$ is a supermartingale if:
\begin{equation}
    \label{eq:supermartingale}
        \mathbb{E}[ W_{k+1} \mid \mb{x}_{0:k}] \leq W_k~\textrm{almost~surely~for~all~}k\in\mathbb{N}_{\geq 0 },
\end{equation}
    where $\mb{x}_{0:k}$ indicates the random variables $\left\{\mb{x}_0, \mb{x}_1, \ldots, \mb{x}_k\right\}$. If, additionally, $W_k\geq 0$ for all $k\in\mathbb{N}_{\geq 0 } $, $W_k$ is a nonnegative supermartingale. If the process is non-decreasing in expectation, the process $W_k$ is a submartingale. If the inequality \eqref{eq:supermartingale} holds with equality, the process $W_k$ is a martingale. 
\end{definition}

An important result from martingale theory that we will use is \textit{Ville's inequality}, which bounds the probability that a nonnegative supermartingale rises above a certain value: 

\begin{theorem}[Ville's Inequality \cite{ville1939etude}]
    Let $W_k$ be a nonnegative supermartingale. Then for all $\lambda\in\R_{>0}$, 
    \begin{align}
        \mathbb{P} \left\{ \sup_{k\in \mathbb{N}_{\geq 0 }} W_k > \lambda  \right\} \leq \frac{\mathbb{E}[W_0]}{\lambda}.
        \label{eq:ville}
    \end{align}
\end{theorem}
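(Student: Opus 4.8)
The plan is to prove the finite-horizon version of the bound first and then pass to the limit $N \to \infty$ using continuity of the probability measure. Fix $\lambda > 0$ and $N \in \mathbb{N}_{\geq 0}$, and consider the event $A_N \triangleq \{\max_{0 \le k \le N} W_k > \lambda\}$. The conceptual point driving the argument is that one cannot bound $\mathbb{P}(A_N)$ directly through $\E{W_N}$, since for a supermartingale $\E{W_N}$ may have decayed well below $\lambda\,\mathbb{P}(A_N)$; instead I would track the process only up to the first time it crosses the level $\lambda$.

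Concretely, define the first-passage time $\tau_N \triangleq \min\{k \le N : W_k > \lambda\}$, with the convention $\tau_N \triangleq N$ if $W_k \le \lambda$ for all $k \le N$. The first step is to verify that the stopped process $M_k \triangleq W_{\tau_N \wedge k}$ (where $\wedge$ denotes minimum) is itself a nonnegative supermartingale. This holds because the increment satisfies $M_{k+1} - M_k = (W_{k+1} - W_k)\,\mathbf{1}_{\{\tau_N > k\}}$, and the event $\{\tau_N > k\}$ is determined by $\mb{x}_{0:k}$; pulling this indicator out of the conditional expectation and applying \eqref{eq:supermartingale} gives $\E{M_{k+1} \mid \mb{x}_{0:k}} \le M_k$ almost surely. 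Iterating the tower property then yields $\E{W_{\tau_N}} = \E{M_N} \le \E{M_0} = \E{W_0}$, which is just the optional stopping inequality applied to a bounded stopping time.

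The second ingredient is a matching lower bound on $\E{W_{\tau_N}}$. On the event $A_N$ the process has crossed $\lambda$ by time $N$, so $\tau_N$ is exactly the crossing time and $W_{\tau_N} > \lambda$; on the complement $A_N^c$ we have $W_{\tau_N} = W_N \ge 0$ by nonnegativity. Hence $\E{W_{\tau_N}} \ge \E{W_{\tau_N}\,\mathbf{1}_{A_N}} \ge \lambda\,\mathbb{P}(A_N)$, and combining with the previous step gives the finite-horizon estimate $\mathbb{P}(A_N) \le \E{W_0}/\lambda$. Finally I would let $N \to \infty$: the events $A_N$ are nondecreasing in $N$ with $\bigcup_{N} A_N = \{\sup_{k} W_k > \lambda\}$, so continuity of $\mathbb{P}$ from below gives $\mathbb{P}(\sup_k W_k > \lambda) = \lim_{N\to\infty}\mathbb{P}(A_N) \le \E{W_0}/\lambda$, which is \eqref{eq:ville}.

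I expect the main obstacle to be the first step, namely showing that the stopped process remains a supermartingale: the measurability bookkeeping around $\{\tau_N > k\}$ and the justification that this indicator factors out of the conditional expectation are where care is genuinely required. The remaining pieces — the Markov-type decomposition of $\E{W_{\tau_N}}$ over $A_N$ and its complement, and the monotone passage to the limit — are routine once that supermartingale property is in hand.
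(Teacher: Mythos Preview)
The paper does not prove this statement: Ville's inequality is stated as a background result with a citation to \cite{ville1939etude} and is used without proof in the subsequent analysis. Your argument is the standard optional-stopping proof and is correct as written; there is simply no in-paper proof to compare it against.
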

Intuitively, Ville's inequality can be compared with Markov's inequality for nonnegative random variables; since the process $W_k$ is nonincreasing in expectation, Ville's inequality allows us to reason about the probability the process instead reaches some value above $\lambda$.



\cp

\section{Stability of Stochastic Discrete-Time Systems}
\label{sec:stability}

Traditional notions of stability may not necessarily apply to stochastic systems. For example, asymptotic stability to a point or forward invariance of a bounded set may be impossible in the presence of unbounded, stochastic disturbances. Thus more nuanced notions of stability are required \cite{kushner1974basic}. In this section we provide an abridged discussion of existing stability notions for stochastic systems. Notably, we discuss recurrence, boundedness of trajectories in probability and input-to-state stability for distributions with bounded support.

\subsection{Reccurence }

Recurrence is an important notion of stability used in the analysis of Markov chains \cite{meyn1993markov}. A bounded set $A \subset \mathcal{X}$ is recurrent if trajectories enter $A$ in finite time and visit $A$ infinitely often with probability 1 for all initial states $\mb{x}_0 \in \mathcal{X}$. 

\begin{definition}[Recurrence]
    For some bounded set $A \subset \mathcal{X}$ let the hitting time $\tau_A(\mb{x})  \triangleq \inf \{ k \in \mathbb{N}_{\geq 0} ~\textrm{s.t.}~ \mb{x}_k \in A, \; \mb{x}_0 = \mb{x} \} $. A set $A$ is recurrent if for every $\mb{x} \in \mathcal{X}$, $
        \mathbb{P}\{\tau_A(\mb{x}) < \infty  \} = 1. $ We say a system \eqref{eq:dt_dynamics} is recurrent if there exists a recurrent set $A$.
\end{definition}

\noindent Recurrence relates to the notion of stability for deterministic systems where trajectories remain within a set for all time, a property which is guaranteed for ISS systems. We refer the reader to \cite{meyn1993markov} for a more thorough treatment of Markov chain stability, recurrence, and ergodic theory.

\subsection{Boundedness in Probability}

Another notion of stability for stochastic systems is the probability that the state remains in a bounded region. Since it is often impossible to keep trajectories of \eqref{eq:dt_dynamics} bounded for all time \cite{steinhardt_finite-time_2012}, it is common to discuss these probabilities over some finite horizon $k \in \{ 0, \dots, K\} $ for some $K \in \mathbb{N}_{\geq 0 }$.   
\begin{definition}[Bounded in Probability]
    The system \eqref{eq:dt_dynamics} is bounded in probability for some $K \in \mathbb{N}_{\geq 0 }$ if there exists an $M>0$ and $ \epsilon \in (0,1)$ such that 
    \begin{align}
        \mathbb{P} \left\{ \max_{k \leq K} \Vert \mb{x}_k \Vert_{\ellp} \leq M \right\} \geq 1 - \epsilon. 
    \end{align}
\end{definition}
\noindent This notion of stability is central to Harold Kushner's work on on stochastic stability \cite{kushner_stochastic_1967} which we draw on for this paper, and which formed the basis for recent martingale-based approaches to finite-time stability \cite{steinhardt_finite-time_2012} and safety \cite{santoyo_verification_2019, santoyo_barrier_2021, cosner_robust_2023} for systems with unbounded uncertainty. This relates directly to the forward invariant region guaranteed to exist around the equilibrium point of ISS systems.

\subsection{ISS for Bounded Disturbance Distributions}

If the disturbance distribution $\mathcal{D}$ for system \eqref{eq:dt_dynamics} is only supported on a bounded set, then the essential supremum $\norm{\mathcal{D}}_{L^\infty}$ is well defined; thus if a system satisfies the  \ref{eq:ISS} condition \eqref{cor:iss2}, it is said to be stable in the ISS sense. Several authors have worked to extend ISS to the setting of unbounded stochastic disturbances. \cite{tsinias1998stochastic} proposed an ISS condition for continuous-time systems with unbounded disturbances, but required the disturbance magnitude to be upper bounded by a class-$\mathcal{K}$ of the state norm (thus, the disturbance vanishes at the equilibrium, a common but restrictive assumption). \cite{tang2020inputtostate, mcallister2021stochastic} also study stochastic variants of ISS, but only require that the \eqref{eq:ISS} condition hold for the expected trajectory (which does not guarantee boundedness of any trajectories).

\cp

\section{Input-to-State Stability for Unbounded Random Disturbances}

In this paper, we seek to generalize the notion of input-to-state stability to systems that are subject to unbounded random disturbances. Specifically, two issues arise when the support of $\mathcal{D}$ is unbounded: (i) the essential supremum $\norm{\mathcal{D}}_{L^\infty}$ may not be well defined, rendering the ISS condition inapplicable, and (ii) the probability that $\mb{x}_k$ remains in any bounded set for all $k \in \mathbb{N}$ is zero in general. 


This second point is somewhat non-intuitive; however, consider a system with additive Gaussian noise, $\mb{x}_{k+1} = \widehat{\mb{f}}(\mb{x}_k) + \mb{d}_k$, with $\mb{d}_k \sim \mathcal{N}(\mu, \boldsymbol{\Sigma})$ for some $\mu \in \R^n $ and $\boldsymbol{\Sigma} = \boldsymbol{\Sigma}^T > 0 $. Then, since the tails of $\mb{d}$ are unbounded, for any $B > 0,$ $\mathbb{P}\left\{\norm{\mb{d}}_\ellp > 2B \right\} = \epsilon  > 0$. This means, with probability $\epsilon$, $\norm{\mb{f}(\mb{x}, \mb{d})}_\ellp \geq \norm{\mb{d}}_\ellp - \norm{\widehat{\mb{f}}(\mb{x})}_\ellp > 2B - B = B$.\footnote{We must have $\norm{\widehat{\mb{f}}(\mb{x}_k)} \leq B$ for $\norm{\mb{x}_k} \leq B$; otherwise deterministic trajectories starting at $\mb{x}_k$ would leave the set in one step.} Thus, for any $K \in \mathbb{N}_{\geq 0}$, 
\begin{align*}
    \mathbb{P}\{\norm{\mb{x}_k}_\ellp < B, \; \forall k \leq K\} &\leq \mathbb{P}\{ \norm{\mb{d}_k}_\ellp \leq 2B, \; \forall k \leq K\}\\  &= (1-\epsilon)^K,
\end{align*} since all $\mb{d}_k$ are independent. Thus, as $K \to \infty$, the probability of the state remaining bounded goes to zero.

Thus, when generalizing ISS to the case of unbounded disturbances, we should expect weaker guarantees than those provided by the typical condition \eqref{eq:ISS}. With this in mind, we now define such a notion, \textit{Input-to-State Stability in Probability (ISSp)}, which is well-defined for systems subject to unbounded noise.  

\begin{definition}[Input-to-State Stable in Probability]
The system \eqref{eq:dt_dynamics} is \textit{input-to-state stable in probability (ISSp)} with repect to $L^p$ if, for any $\epsilon \in (0,1)$, $K \in \mathbb{N}_{\geq 0}$ and distribution $\mathcal{D} \in L^p$ such that $\Vert \mathcal{D}\Vert_{L^p}$, there exist functions $\beta \in \mathcal{KL}$, and $\gamma \in \mathcal{K}$ such that
\begin{align}
    \mathbb{P}\Big\{\norm{\mb{x}_k}_{\ellp} \leq \beta(\norm{\mb{x}_0}_{\ellp}, k) & + \gamma\big(\norm{\mathcal{D}}_{L^p}\big), \forall k \leq K \Big\} \nonumber \\
    &\geq 1 - \epsilon. \label{eq:issp}
\end{align}
If this holds for $\beta(\norm{\mb{x}_0},k) = M \alpha^k \norm{\mb{x}},$ for $M > 0, \alpha \in (0,1)$, the system is exponentially input-to-state stable in probability (ISSp).
\end{definition}

ISSp is a generalization of ISS to systems with (unbounded) stochastic disturbances. Intuitively, a system is ISSp if, for any disturbance in $L^p,$ and for any finite horizon $K$, there exist $\beta, \gamma$ such that the ISS condition \eqref{eq:ISS} (with the $L^\infty$ norm relaxed to the $L^p$) holds with a probability arbitrarily close to 1. 

As with ISS, we now relate ISSp to Lyapunov functions which can be used to verify this property.
\begin{definition}[ISSp Lyapunov Function]
    \label{def:issp_lyap}
    A continuous function $V: \R^n \to \R_\geq 0 $ is an \textit{ISSp Lyapunov Function} for the system \eqref{eq:dt_dynamics} if there exist functions $\kappa_1, \kappa_2, \kappa_3 \in \mathcal{K}_\infty$ and $\kappa_4 \in \mathcal{K}$ such that, 
    \begin{align}
        \kappa_1 (\Vert \mb{x} \Vert_{\ellp}) \leq V(\mb{x})  & \leq \kappa_2 (\Vert \mb{x}) \Vert_{\ellp}) \label{eq:v_bounds}\\
        \mathbb{E} [V(\mb{f}(\mb{x}, \mb{d}) - V(\mb{x}) ] & \leq - \kappa_3(V(\mb{x})) + \kappa_4(\Vert \mathcal{D} \Vert_{L^p})  \label{eq:delta_v}
    \end{align}
    for all $\mb{x} \in \mathcal{X}$ and $\Vert \mathcal{D}\Vert_{L^p} < \infty$. Additionally, if there exist constants $a, b, c > 0 $ and $\alpha \in (0,1)$ such that $\kappa_1(r) = a r^c , \kappa_2 (r) = b r^c ,$ and $ \kappa_3(r) = \alpha r$, then $V$ is an \textit{Exponential ISSp (E-ISSp) Lyapunov Function }  
\end{definition} 



\begin{remark}
As in the typical ISS definition \eqref{eq:ISS}, since $\max\{a, b \} \leq a + b \leq \max\{2a, 2b\}$, for suitable choices of $\beta, \gamma$, the ISSp condition \eqref{eq:issp} is equivalent to
\begin{align}
    \mathbb{P}\Big\{\norm{\mb{x}_k}_{\ellp} &\leq \max\left\{\beta(\norm{\mb{x}_0}_{\ellp}, k), \gamma\big(\Vert \mathcal{D} \Vert_{L^p}\big)\right\}, \; \forall k \leq K\Big\} \nonumber \\ &\geq 1 - \epsilon.
\end{align}

\end{remark}

In this paper, for simplicity of exposition, we will consider exponential ISSp.  Note that the results presented apply in the more general case, but the proofs become more complex.  


\cp 

\section{Lyapunov Conditions for E-ISSp}
As for ISS, there exist Lyapunov conditions for E-ISSp. To this end, we will use tools from martingale theory (in particular, Ville's inequality) to demonstrate that the existence of a Lyapunov function satisfying a drift condition in expectation implies a system is E-ISSp.\begin{theorem} 
    If there exists an E-ISSp Lyapunov function for system \eqref{eq:dt_dynamics}, then system \eqref{eq:dt_dynamics} is E-ISSp. \label{thm:lyap}
\end{theorem}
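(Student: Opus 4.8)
The plan is to reduce the claim to a simultaneous (in $k$) upper bound on the Lyapunov function $V$ over the horizon, and to extract that bound from a single application of Ville's inequality \eqref{eq:ville} to a suitably shifted nonnegative supermartingale. First I would rewrite the E-ISSp drift condition \eqref{eq:delta_v} in recursive form. Writing $V_k \triangleq V(\mb{x}_k)$ and abbreviating the (constant-in-$k$) offset $\delta \triangleq \kappa_4(\norm{\D}_{L^p})$, the Markov structure of \eqref{eq:dt_dynamics} together with \eqref{eq:delta_v} and $\kappa_3(r)=\alpha r$ gives, almost surely,
\[ \E{V_{k+1}\mid \mb{x}_{0:k}} \le (1-\alpha)V_k + \delta, \]
where the conditional expectation reduces to an expectation over $\mb{d}\sim\D$ by independence of the disturbances. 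Because $\kappa_1(\norm{\mb{x}})=a\norm{\mb{x}}^c \le V(\mb{x})$, any high-probability upper bound $V_k \le \lambda$ yields $\norm{\mb{x}_k}\le(\lambda/a)^{1/c}$, so it suffices to control $\max_{k\le K}V_k$.

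Second, I would build a nonnegative supermartingale tailored to the horizon $K$. The natural ``contraction toward a ball'' candidate $(1-\alpha)^{-k}(V_k-\delta/\alpha)$ is indeed a supermartingale but fails to be nonnegative; the key idea is instead the time-dependent shift
\[ W_k \triangleq V_k + (K-k)\delta, \qquad 0\le k\le K. \]
This is nonnegative on the horizon (both summands are), lies in $L^1$ (iterating the recursion gives $\E{V_k}\le(1-\alpha)^kV_0+\delta/\alpha<\infty$), and is a supermartingale: substituting the recursion,
\[ \E{W_{k+1}\mid\mb{x}_{0:k}}\le(1-\alpha)V_k+\delta+(K-k-1)\delta\le V_k+(K-k)\delta=W_k, \]
where the last inequality uses $(1-\alpha)V_k\le V_k$. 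Freezing $W_k\equiv W_K$ for $k>K$ keeps it a nonnegative supermartingale over all of $\mathbb{N}_{\ge0}$, so Ville's inequality applies. Since $W_k\ge V_k$, choosing $\lambda=\E{W_0}/\epsilon=(V_0+K\delta)/\epsilon$ gives
\[ \mathbb{P}\Big\{\max_{k\le K}V_k>\tfrac{V_0+K\delta}{\epsilon}\Big\}\le\mathbb{P}\Big\{\sup_k W_k>\lambda\Big\}\le\epsilon. \]

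Third, I would convert this into the E-ISSp template. On the complementary event, which has probability at least $1-\epsilon$, I take $c$-th roots and use $V_0\le\kappa_2(\norm{\mb{x}_0})=b\norm{\mb{x}_0}^c$ to obtain, for all $k\le K$,
\[ \norm{\mb{x}_k}\le\left(\tfrac{V_k}{a}\right)^{1/c}\le\left(\tfrac{b\norm{\mb{x}_0}^c+K\delta}{a\epsilon}\right)^{1/c}. \]
Subadditivity of $s\mapsto s^{1/c}$ (exact for $c\ge1$, up to the constant $2^{1/c}$ otherwise) then splits this into a $\norm{\mb{x}_0}$-term and a $\norm{\D}_{L^p}$-term. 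The latter defines $\gamma(r)\triangleq(K/(a\epsilon))^{1/c}\,\kappa_4(r)^{1/c}\in\mathcal{K}$, while the former, $(b/(a\epsilon))^{1/c}\norm{\mb{x}_0}$, is absorbed into an exponential $\beta$: fixing any $\alpha_0\in(0,1)$ and setting $M\triangleq(b/(a\epsilon))^{1/c}\alpha_0^{-K}$ makes $\beta(\norm{\mb{x}_0},k)=M\alpha_0^{k}\norm{\mb{x}_0}\ge(b/(a\epsilon))^{1/c}\norm{\mb{x}_0}$ for every $k\le K$. Since the definition of ISSp permits $\beta,\gamma$ to depend on $\epsilon$ and $K$, this establishes \eqref{eq:issp} and hence E-ISSp.

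The main obstacle is the supermartingale construction in the second step: the contraction estimate holds only in expectation, and its ``decaying-envelope'' normalization destroys nonnegativity, so some device is needed to produce a genuinely nonnegative supermartingale over the finite window --- here the linear-in-$(K-k)$ offset, whose price is the $K\delta$ term in $\E{W_0}$, and hence the horizon-dependent conservatism the paper later reports. The remaining work (integrability, the $c<1$ subadditivity constant, and verifying the $\mathcal{K}$/$\mathcal{KL}$ class memberships) is routine.
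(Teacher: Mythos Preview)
Your proof is correct but uses a genuinely different supermartingale than the paper. The paper \emph{rescales} by $\theta^k=(1-\alpha)^{-k}$, setting $W_k=\theta^k V_k+\varphi\sum_{i=k+1}^{K}\theta^i$; this preserves the contraction, so after Ville and unwinding one obtains a time-decaying envelope $V_k\le M\norm{\mb{x}_0}^c(1-\alpha)^k+\text{const}\cdot\varphi$, and hence an exponential $\beta$ with the intrinsic rate $\tilde\alpha=(1-\alpha)^{1/c}$ determined by the Lyapunov drift. Your construction $W_k=V_k+(K-k)\delta$ instead throws away the $-\alpha V_k$ term (bounding $(1-\alpha)V_k\le V_k$), so Ville yields only a constant-in-$k$ bound on $V_k$; you then retrofit the exponential form by inflating $M$ with $\alpha_0^{-K}$ for an \emph{arbitrary} $\alpha_0\in(0,1)$. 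Both routes verify the formal E-ISSp definition, since \eqref{eq:issp} allows $\beta,\gamma$ to depend on $\epsilon$ and $K$. Your argument is algebraically lighter and avoids the geometric-series bookkeeping; the paper's buys a decay rate that is actually tied to the system, which is what drives the quantitative bounds in Sections~\ref{sec:LQG}--\ref{sec:walking}. (A minor corner case you share with the paper: when $\norm{\mb{x}_0}=\norm{\D}_{L^p}=0$ your $\lambda=0$ and Ville is vacuous, but then $V_k\equiv 0$ a.s.\ by induction and the bound is trivial.)
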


\begin{proof}
We begin by constructing a nonnegative supermartingale $W(\mb{x}_k, k)$ via a time-varying, affine transform of the Lyapunov function $V(\mb{x}_k)$. Rearranging the Lyapunov drift condition \eqref{eq:delta_v} with $\kappa_3(r) = \alpha r$ for some $\alpha \in (0,1) $ , we can see that $V(\mb{x}_k)$ almost resembles a supermartingale\footnote{Note that $\E{V(\mb{x}_{k+1}) \mid \mb{x}_k} = \E{V(\mb{x}_{k+1}) \mid \mb{x}_{0:k}}$ since system \eqref{eq:dt_dynamics} is Markovian.},
\begin{align}
    \E{V(\mb{x}_{k+1}) \mid \mb{x}_k} \leq (1 - \alpha) V(\mb{x}_k) + \varphi
\end{align}
where we define $\varphi \triangleq \kappa_4(\Vert \mathcal{D} \Vert_{L^p})\geq 0 $. However, this is not exactly a supermartingale due to the $(1-\alpha)$ scaling and the additive constant $\varphi$. 

Thus, for a particular horizon $K\in \mathbb{N}_{\geq 0 }$, we construct $W(\mb{x}_k, k)$ by undoing this scaling and translation. Letting $W_k \triangleq W(\mb{x}_k, k)$ for simplicity, this construction is:
\begin{align}
    W_k = \underbrace{\theta^k V(\mb{x}_k)}_{\textrm{rescale}} - \underbrace{\varphi \sum_{i=1}^k \theta^{i}}_{\textrm{translate}} + \underbrace{ \varphi \sum_{i=1}^K \theta^i}_{\textrm{ensure } W_k \geq 0 } , 
    \label{eq:w_def}
\end{align}
with $\theta \triangleq \frac{1}{1-\alpha} > 0 $ and the constant term $\varphi \sum_{i=1}^K \theta^{i}$ added to ensure $W_k \geq 0$.

Next we show $W_k$ is a nonnegative supermartingale.
We have $W_k \geq 0$ for any $\mb{x}_k \in \mathcal{X},$ since $V(\mb{x}_k) \geq 0$ by definition, and $\theta, \varphi \geq 0.$ Further, we have
\begin{align}
\mathbb{E}\big[&W_{k+1} \mid \mb{x}_k\big] = \mathbb{E} \left[\theta^{k+1} V(\mb{x}_{k+1}) + \varphi \sum_{i=k+2}^K \theta^i \right]\\
&\leq \theta^{k+1}\left((1-\alpha) V(\mb{x}_k) + \varphi\right) + \varphi \sum_{i=k+2}^K \theta^i\label{eq:mart_drift}\\
&= \theta^k V(\mb{x}_k) + \varphi \sum_{i=k+1}^K \theta^i = W_k,\label{eq:mart_simp}
\end{align}
where the inequality \eqref{eq:mart_drift} follows from the drift condition \eqref{eq:delta_v} and \eqref{eq:mart_simp} uses the fact that $\theta = \frac{1}{1-\alpha}.$

Since $W_k$ is a nonnegative supermartingale, we can apply Ville's inequality \eqref{eq:ville} to bound the probability $W_k$ that remains below any $\lambda > 0$ for all $k \leq K$.  Specifically, 
\begin{align}
    \mathbb{P}\Big\{ W(\mb{x}_k) \leq \lambda, \; \forall k \leq K \Big\} \geq 1 - \frac{W(\mb{x}_0)}{\lambda}.
    \label{eq:ville_w}
\end{align}
We also note that, using the geometric series identity $\sum_{i=1}^k \theta^{i-1} = \frac{\theta^k-1}{\theta -1}$, we can write $W_k$ as
\begin{align}
    W_k = \theta^k V(\mb{x}_k) + \frac{\theta \varphi (\theta^K - \theta^k)}{\theta - 1} \label{eq:w_geom}
\end{align}

Examining the structure of $W_k,$ if for all $k \leq K$ we have $W_k \leq \lambda,$ rearranging this inequality results in
\begin{align}
    V(\mb{x}_k) &\leq \left(\lambda - \frac{\theta^{K+1} \varphi}{\theta - 1}\right) \theta^{-k} + \frac{\theta }{\theta - 1}  \varphi\\
    & \leq (\lambda - \varphi ) \theta^{-k} + \frac{\theta}{\theta -1} \varphi \label{eq:theta_bound1}\\
    &\triangleq \left( M \norm{\mb{x}_0}_{\ellp}^c + \dtwo \varphi \right)\theta^{-k}  +  \frac{\theta}{\theta - 1} \varphi \label{eq:lambda_def}\\
    & \leq M \norm{\mb{x}_0}_{\ellp}^c \theta^{-k} + \dtwo \varphi  + \frac{\theta}{\theta - 1} \varphi \label{eq:theta_bound2}
\end{align}
Inequality \eqref{eq:theta_bound1} follows from $\theta > 1$ and $\varphi \geq 0  $. Equality \eqref{eq:lambda_def} follows from choosing $\lambda = M \norm{\mb{x}_0}^c + \left( 1 + \dtwo \right) \varphi$  
for some $ \dtwo \geq 0 $ and $M, \;c> 0$ . Inequality \eqref{eq:theta_bound2} is due to $\theta >1$ and $\varphi, \eta \geq 0 $. 

Further, using the lower bound \eqref{eq:v_bounds} on $V(\mb{x}_k)$ and the definition of $\theta$, \eqref{eq:theta_bound2} becomes
\begin{align}
    a \norm{\mb{x}_k}_{\ellp}^c \leq M \norm{\mb{x}_0}_{\ellp}^c (1-\alpha)^k + \left(\dtwo + \frac{1}{\alpha} \right)  \varphi
\end{align}
for some $a >0$ which, rearranging, and raising both sides to the power of $\frac{1}{c}$ (which preserves order since $c >0$), yields
\begin{align}
    \norm{\mb{x}_k}_{\ellp} &\leq \left(\frac{M}{a} \norm{\mb{x}_0}_{\ellp}^c (1-\alpha)^k + \frac{(\dtwo + \frac{1}{\alpha})}{a} \varphi \right)^{\frac{1}{c}}\\
    &\leq \left(\frac{M}{a}\right)^\frac{1}{c} \zeta \norm{\mb{x}_0}_{\ellp} (1 - \alpha)^{\frac{k}{c}} + \zeta \left(\frac{(\dtwo + \frac{1}{\alpha})\varphi}{a}\right)^{\frac{1}{c}}  \label{eq:power_split}\\
    &\triangleq \tilde{M} \tilde{\alpha}^k \norm{\mb{x}_0}_{\ellp} + \gamma_\eta (\Vert \mathcal{D}\Vert_{L^p}) ,
\end{align}
for $\tilde{M} \triangleq \zeta \left( M / a \right)^\frac{1}{c} > 0$, $\tilde{\alpha} \triangleq (1 - \alpha)^\frac{1}{c}\in (0, 1) $, some $\zeta>0$ as needed, and $\gamma_\eta(r) \triangleq \zeta\left( \frac{\dtwo + \frac{1}{\alpha}}{a}\right)^\frac{1}{c}\kappa_4(r)^\frac{1}{c}$ which is a class $\mathcal{K}$ for all $\dtwo \geq 0$. The existence of $\zeta$ follows from Lemma \ref{lemma:power_split} shown in the Appendix. 

Thus, we now must ensure there exists a suitable choice of $M, \eta$ such that the probability that this bound holds for all $k \in \{ 1, \ldots, K \}$ is greater than $1-\epsilon.$ By Ville's inequality, 
\begin{align}
    \mathbb{P}&\left\{\norm{\mb{x}_k}_{\ellp} \leq \tilde{M} \tilde{\alpha}^k \norm{\mb{x}_0}_{\ellp} +\gamma_\dtwo(\Vert \mathcal{D} \Vert_{L^p}) , \; \forall k \leq K \right\} \nonumber \\ & \geq 1 - \frac{W_0}{\lambda} = 1 - \frac{V(\mb{x}_0) + \frac{\varphi}{\alpha} ((1-\alpha)^{-K} - 1)}{M \norm{\mb{x}_0}_{\ellp}^c + (1 + \eta) \varphi}, \label{eq:prob_bound}
\end{align}
with $W_0 \geq 0$ by definition. Thus, as long as $\norm{\mb{x}_0}_{\ellp}^c,$ $\norm{\mathcal{D}}_{L^p}$ are not both zero, we can choose $M, \eta$ large enough to have
\begin{align*}
    \mathbb{P}\Big\{ \norm{\mb{x}_k}_{\ellp} \leq  \tilde{M} \tilde{\alpha}^k \norm{\mb{x}_0}_{\ellp} + & \gamma_\dtwo(\Vert  \mathcal{D} \Vert_{L^p}), \; k \leq K \Big\}  \geq 1 - \epsilon, \label{eq:loose_bound}
\end{align*}
for any $\epsilon \in (0,1)$, so the system must be E-ISSp. 

\end{proof}

\begin{remark}
    The variables $M\geq 0 $ and $\dtwo \geq 0 $ are free parameters which can be varied to analyze the probability of convergence and boundedness respectively. We note that the bound in \eqref{eq:loose_bound} may be very weak; stronger bounds can be achieved by removing the bounding steps in \eqref{eq:theta_bound1} and \eqref{eq:theta_bound2}, but clarity was chosen over tightness for this proof. 
\end{remark}

\cp 

\section{Connections to other Stability Notions}
Here we discuss connections between ISS, ISSp, and other notions of stability for stochastic systems, as surveyed in Section \ref{sec:stability}. 

\begin{corollary}
\label{cor:iss-infty}
If the system \eqref{eq:dt_dynamics} is ISS, then it is ISSp with respect to $L^\infty$. 
\end{corollary}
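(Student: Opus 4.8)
The plan is to show that the very same $\beta \in \mathcal{KL}$ and $\gamma \in \mathcal{K}$ that witness the deterministic ISS property \eqref{eq:ISS} also witness ISSp \eqref{eq:issp}, and that the probabilistic bound in fact holds with probability one, which trivially exceeds the required $1-\epsilon$. The key observation is that when $\mathcal{D} \in L^\infty$ the essential supremum $\norm{\mathcal{D}}_{L^\infty}$ is finite and, by the definition of the essential supremum, each disturbance satisfies $\norm{\mb{d}_k}_{\ellp} \leq \norm{\mathcal{D}}_{L^\infty}$ almost surely. Thus the unbounded-disturbance pathology described at the start of the section does not occur, and the deterministic ISS bound can be applied sample-path-by-sample-path.

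First I would fix an arbitrary $\epsilon \in (0,1)$, horizon $K \in \mathbb{N}_{\geq 0}$, and distribution $\mathcal{D} \in L^\infty$. For each fixed $k$, the event $\{\norm{\mb{d}_k}_{\ellp} > \norm{\mathcal{D}}_{L^\infty}\}$ is null. Then the event $E \triangleq \{\norm{\mb{d}_k}_{\ellp} \leq \norm{\mathcal{D}}_{L^\infty} \text{ for all } k \in \mathbb{N}_{\geq 0}\}$ has probability one, since its complement is a countable union of null events. Hence $\mathbb{P}(E) = 1$.

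Next I would argue pathwise. For any outcome $\omega \in E$, the realized disturbance sequence $\{\mb{d}_k(\omega)\}$ is a fixed deterministic sequence in $\R^d$ with $\sup_{k} \norm{\mb{d}_k(\omega)}_{\ellp} \leq \norm{\mathcal{D}}_{L^\infty}$, and the induced state trajectory $\{\mb{x}_k(\omega)\}$ is exactly the trajectory of \eqref{eq:dt_dynamics} driven by this input. Since the system is ISS, the deterministic bound \eqref{eq:ISS} applies to this input, giving $\norm{\mb{x}_k(\omega)}_{\ellp} \leq \beta(\norm{\mb{x}_0}_{\ellp}, k) + \gamma(\sup_{k} \norm{\mb{d}_k(\omega)}_{\ellp})$ for all $k$. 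Using that $\gamma \in \mathcal{K}$ is monotonically increasing together with $\sup_{k} \norm{\mb{d}_k(\omega)}_{\ellp} \leq \norm{\mathcal{D}}_{L^\infty}$, I obtain $\norm{\mb{x}_k(\omega)}_{\ellp} \leq \beta(\norm{\mb{x}_0}_{\ellp}, k) + \gamma(\norm{\mathcal{D}}_{L^\infty})$ for all $k$, in particular for all $k \leq K$.

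Since this holds for every $\omega \in E$ and $\mathbb{P}(E) = 1$, the ISSp event in \eqref{eq:issp} contains $E$ and therefore has probability one, which exceeds $1-\epsilon$ for any $\epsilon \in (0,1)$. The only place measure-theoretic care is needed — and the main obstacle — is the passage from the per-$k$ almost-sure bound $\norm{\mb{d}_k}_{\ellp} \leq \norm{\mathcal{D}}_{L^\infty}$ to a simultaneous bound over all $k$ on a single full-measure event; this is precisely what countable subadditivity supplies, and it is what allows the pathwise deterministic ISS inequality to be invoked without exception on a probability-one set. Everything else reduces to the monotonicity of $\gamma$ and reusing the deterministic $\beta, \gamma$ unchanged.
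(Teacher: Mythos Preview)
Your proof is correct and follows essentially the same approach as the paper: use finiteness of $\norm{\mathcal{D}}_{L^\infty}$, apply the deterministic ISS bound pathwise, and conclude the ISSp inequality holds with probability one. You are in fact more careful than the paper, which asserts the probability-one statement directly without spelling out the countable-subadditivity step or the use of monotonicity of $\gamma$.
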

\begin{proof}
    By definition, if a system is ISS, then for all $k \in \mathbb{N}_{\geq 0},$ there exist $\beta \in \mathcal{KL}, \gamma \in \mathcal{K}$ such that 
    \begin{align}
        \norm{\mb{x}_k}_{\ellp} \leq \beta(\norm{\mb{x}_0}_{\ellp}, k)  + \gamma \left( 
        \vartheta
        \right).
    \end{align}
    for all $\vartheta \geq \sup_{k \in \mathbb{N}_{\geq 0}} \Vert \mb{d}_k \Vert$. 
    
    Thus, since the $L^\infty$-norm (equivalently, the essential supremum) is finite for all random variables in $L^\infty,$ for any $\mb{d} \sim \mathcal{D}$ with $\mb{d} \in L^\infty,$ we have:
    \begin{align*}
        \mathbb{P}\Big\{\norm{\mb{x}_k}_{\ellp} \leq \beta(\norm{\mb{x}_0}_{\ellp}, k) + \gamma(\norm{\mathcal{D}}_{L^\infty}), \; \forall k \in \mathbb{N} \Big\} = 1. 
    \end{align*}
    Thus trivially, for any $K \in \mathbb{N}_{\geq 0 }$, $\epsilon \in (0, 1)$, we have $\beta \in \mathcal{KL}, \gamma \in \mathcal{K}$ such that \eqref{eq:issp} holds for all distributions $\mathcal{D}$ with $\mb{d} \in L^\infty$ . Thus, the system is ISSp w.r.t. $L^\infty$.
 \end{proof}
Corollary \ref{cor:iss-infty} provides a clear connection between ISS and ISSp: if a system is ISS, it by definition is ISSp for disturbances in $L^\infty$.
Next, we discuss the relationship between ISS and ISSp w.r.t. $L^2$ which is a much larger class of unbounded random variables.

\begin{corollary}
    If the system \eqref{eq:dt_dynamics} is additive with respect to its disturbance and admits a twice-continuously differentiable, convex E-ISS Lyapunov function $V: \R^n \to \R_{\geq 0}$ such that $\sup_{\mb{x} \in \mathcal{X}}  \Vert \nabla^2 V(\mb{x}) \Vert_2  \leq  \lambda_{\textrm{max}}$ for some $\lambda_{\textrm{max}} \geq 0 $ then it is E-ISSp for $\mb{d} \in L_p$ with $\E{\mb{d}} = 0 $ for  $p \geq 2$. 
    \label{cor:iss2}
\end{corollary}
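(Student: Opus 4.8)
The plan is to appeal directly to Theorem~\ref{thm:lyap}: it suffices to show that the given convex, twice-differentiable E-ISS Lyapunov function $V$ is in fact an E-ISSp Lyapunov function in the sense of Definition~\ref{def:issp_lyap}. The comparison bounds \eqref{eq:v_bounds} with $\kappa_1(r)=ar^c$, $\kappa_2(r)=br^c$ are inherited verbatim from the E-ISS hypothesis, so the entire task reduces to verifying the expected drift condition \eqref{eq:delta_v} with a \emph{linear} $\kappa_3$. I would use the additivity assumption to write the dynamics \eqref{eq:dt_dynamics} as $\mb{f}(\mb{x},\mb{d}) = \widehat{\mb{f}}(\mb{x}) + \mb{d}$, so that the stochastic update is a zero-mean perturbation of the deterministic map $\widehat{\mb{f}}$.

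The core computation is a second-order Taylor expansion of $V$ about $\widehat{\mb{f}}(\mb{x})$, whose remainder is controlled by the spectral-norm Hessian bound $\lambda_{\textrm{max}}$, giving
\begin{align*}
    V(\widehat{\mb{f}}(\mb{x}) + \mb{d}) \leq V(\widehat{\mb{f}}(\mb{x})) + \nabla V(\widehat{\mb{f}}(\mb{x}))^\top \mb{d} + \tfrac{1}{2}\lambda_{\textrm{max}}\norm{\mb{d}}^2.
\end{align*}
Taking expectations over $\mb{d}\sim\mathcal{D}$ and invoking $\E{\mb{d}} = \mb{0}$ annihilates the linear term, leaving $\E{V(\widehat{\mb{f}}(\mb{x})+\mb{d})} \leq V(\widehat{\mb{f}}(\mb{x})) + \tfrac{1}{2}\lambda_{\textrm{max}}\norm{\mathcal{D}}_{L^2}^2$. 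I would then bound the deterministic term by evaluating the E-ISS drift \eqref{eq:iss_lyap} at $\mb{d}=\mb{0}$ (using $\kappa_4(0)=0$), yielding $V(\widehat{\mb{f}}(\mb{x})) \leq V(\mb{x}) - \alpha\norm{\mb{x}}^c$, and convert the $\norm{\mb{x}}^c$ term into a multiple of $V(\mb{x})$ via the upper bound $V(\mb{x})\leq b\norm{\mb{x}}^c$, i.e. $\norm{\mb{x}}^c \geq V(\mb{x})/b$. Combining these gives exactly the E-ISSp drift
\begin{align*}
    \E{V(\mb{f}(\mb{x},\mb{d}))} - V(\mb{x}) \leq -\tfrac{\alpha}{b}\,V(\mb{x}) + \tfrac{1}{2}\lambda_{\textrm{max}}\norm{\mathcal{D}}_{L^2}^2,
\end{align*}
so $\kappa_3(r) = (\alpha/b)\,r$ is linear as required and $\kappa_4(r) = \tfrac{1}{2}\lambda_{\textrm{max}}r^2 \in \mathcal{K}$. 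Finally, since $\norm{\mathcal{D}}_{L^2}\leq\norm{\mathcal{D}}_{L^p}$ for $p\geq 2$ by the nesting of $L^p$ spaces, $\kappa_4$ can be expressed in terms of $\norm{\mathcal{D}}_{L^p}$ and is finite whenever $\mathcal{D}\in L^p$; Theorem~\ref{thm:lyap} then delivers E-ISSp.

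The main obstacle I anticipate is not the Taylor estimate itself but the bookkeeping needed to keep $\kappa_3$ a legitimate contraction: I must verify that $\alpha/b\in(0,1)$, which follows because $V(\widehat{\mb{f}}(\mb{x}))\geq 0$ forces $\alpha\norm{\mb{x}}^c \leq V(\mb{x}) \leq b\norm{\mb{x}}^c$ and hence $\alpha\leq b$, with strictness holding away from the degenerate affine case (ruled out since $V\geq a\norm{\mb{x}}^c>0$ precludes $\lambda_{\textrm{max}}=0$). A secondary point is ensuring $\E{\norm{\mb{d}}^2}<\infty$, which is precisely why the hypothesis demands $p\geq 2$; the zero-mean assumption, rather than convexity, is the essential ingredient that suppresses the first-order term, and the spectral bound $\lambda_{\textrm{max}}$ alone suffices to control the quadratic remainder.
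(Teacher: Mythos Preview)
Your approach is correct and essentially identical to the paper's: both expand $V$ about $\widehat{\mb{f}}(\mb{x})$ using the Hessian bound (the paper phrases this as a Jensen-gap estimate with $\tfrac{\lambda_{\max}}{2}\tr(\cov(\mb{d}))$, which for zero-mean $\mb{d}$ coincides with your $\tfrac{\lambda_{\max}}{2}\|\mathcal{D}\|_{L^2}^2$), kill the first-order term via $\E{\mb{d}}=0$, and then invoke the deterministic E-ISS drift at $\mb{d}=0$. Your explicit conversion $-\alpha\|\mb{x}\|^c\leq -(\alpha/b)V(\mb{x})$ is a step the paper absorbs silently into its ``$-\alpha V(\mb{x})$'' notation, so your write-up is actually a bit more careful there; the only slip is the parenthetical linking strictness of $\alpha/b<1$ to $\lambda_{\max}\neq 0$, which is unrelated---the clean fix is simply to shrink $\alpha$ slightly if equality ever occurred.
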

\begin{proof}
    The dynamics are additive with respect to the disturbance so system \eqref{eq:dt_dynamics} can be rewritten as: 
    \begin{align}
        \mb{x}_k =  \mb{f}(\mb{x}_k, \mb{d})  \triangleq \widehat{\mb{f}}(\mb{x}_k)  + \mb{d} \label{eq:d_add_dyn}
    \end{align}
    The function $V$ is a E-ISS Lyapunov function for \eqref{eq:d_add_dyn} so it satisfies: 
    \begin{align}
        V\left(\widehat{\mb{f}}(\mb{x})  + \mb{d} \right) - V(\mb{x}) \leq -\alpha V(\mb{x}) + \kappa_4(\Vert \mathcal{D} \Vert_{L^\infty}) \label{eq:pf_iss}
    \end{align}
     for all $\mb{x} \in \mathcal{X}$, some $\alpha \in (0, 1) $, $\sigma \in \mathcal{K}$, and any $\mb{d} \in L^\infty$. 

    The expected value of the left side of this inequality is: 
    \begin{align}
        & \E{V\left(\widehat{\mb{f}}(\mb{x})  + \mb{d} \right) - V(\mb{x})}  = \E{\mb{V}\left(\widehat{\mb{f}}(\mb{x})  + \mb{d} \right) } - V(\mb{x}) \nonumber \\ 
        & \leq V\left(\widehat{\mb{f}}(\mb{x})  + \E{\mb{d}} \right) - V(\mb{x}) + \frac{\lambda_{\textrm{max}}}{2}\textrm{tr}(\textrm{cov}(\mb{d})) \label{eq:jensen_gap}\\
        & = V\left(\widehat{\mb{f}}(\mb{x}) + 0\right) - V(\mb{x}) + \frac{\lambda_{\textrm{max}}}{2}\textrm{tr}(\textrm{cov}(\mb{d})) \label{eq:d_mean_zero}\\
        & \leq - \alpha V(\mb{x})  + \frac{\lambda_{\textrm{max}}}{2} \textrm{tr}(\textrm{cov}(\mb{d})) \label{eq:apply_ISS_for_Ed}
    \end{align}
    where \eqref{eq:jensen_gap} accounts for Jensen's inequality as in \cite[Lemma 1]{cosner_robust_2023}, \eqref{eq:d_mean_zero} is due to the assumption that the $\E{\mb{d}} = 0$, and \eqref{eq:apply_ISS_for_Ed} is an application of the E-ISS bound \eqref{eq:pf_iss}.  

    Since bounded covariance implies boundedness in $L^2$, if $\mb{d} \in L^2 $ then $V$ is an E-ISSp Lyapunov function for \eqref{eq:d_add_dyn}. Furthermore, since $L^2 \supseteq L_p$ for  $p \geq 2$, $V$ is an E-ISSp Lyapunov function for $\mb{d} \in L^p$ for all $p \geq 2$. 
\end{proof}

Next we discuss the relationship between ISSp and trajectories that are bounded in probability. 
 

\begin{corollary}
    If system \eqref{eq:dt_dynamics} is ISSp  w.r.t. $L^p$, then for any $\mb{d} \in L^p,$ the system's trajectories are bounded in probability.   
\end{corollary}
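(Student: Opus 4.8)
The plan is to recognize that this corollary is essentially a direct unpacking of the two definitions: the ISSp bound, evaluated over a \emph{finite} horizon, already produces a uniform deterministic ceiling on $\norm{\mb{x}_k}_{\ellp}$, which is exactly what boundedness in probability requires. So I would not invoke any martingale machinery here; instead I would fix the relevant parameters and absorb the time-varying term into a constant.

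First I would fix an arbitrary $\epsilon \in (0,1)$ and an arbitrary horizon $K \in \mathbb{N}_{\geq 0}$. Since the system is ISSp with respect to $L^p$ and $\mb{d} \sim \mathcal{D}$ with $\mb{d} \in L^p$, the $p$-norm $\norm{\mathcal{D}}_{L^p}$ is finite, so the ISSp definition guarantees functions $\beta \in \mathcal{KL}$ and $\gamma \in \mathcal{K}$ for which the probability bound \eqref{eq:issp} holds for this choice of $\epsilon$ and $K$. The next step is to observe that on the (probability-at-least-$1-\epsilon$) event appearing in \eqref{eq:issp}, every iterate obeys $\norm{\mb{x}_k}_{\ellp} \leq \beta(\norm{\mb{x}_0}_{\ellp}, k) + \gamma(\norm{\mathcal{D}}_{L^p})$ simultaneously for all $k \leq K$.

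The one point requiring care is converting this $k$-dependent right-hand side into a single constant $M$. Here I would use the defining monotonicity of class-$\mathcal{KL}$ functions: for fixed first argument, $\beta(\norm{\mb{x}_0}_{\ellp}, \cdot)$ is nonincreasing in its second argument, so $\beta(\norm{\mb{x}_0}_{\ellp}, k) \leq \beta(\norm{\mb{x}_0}_{\ellp}, 0)$ for every $k \geq 0$. I would then set $M \triangleq \beta(\norm{\mb{x}_0}_{\ellp}, 0) + \gamma(\norm{\mathcal{D}}_{L^p})$, which is finite precisely because $\norm{\mathcal{D}}_{L^p} < \infty$ and $\gamma$ is continuous. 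On the good event we then have $\max_{k \leq K} \norm{\mb{x}_k}_{\ellp} \leq M$, so the ISSp event is contained in the event $\{\max_{k \leq K}\norm{\mb{x}_k}_{\ellp} \leq M\}$, giving $\mathbb{P}\{\max_{k \leq K}\norm{\mb{x}_k}_{\ellp} \leq M\} \geq 1 - \epsilon$.

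This exhibits the pair $(M, \epsilon)$ demanded by the definition of boundedness in probability, completing the argument. The ``hard part'' is really just this bookkeeping step of bounding the transient term $\beta$ uniformly over the finite horizon by its initial value; there is no analytical obstacle, since the finiteness of the horizon makes the supremum of the ISSp bound automatically finite. I would note in passing that the same reasoning fails to give boundedness for all time precisely because, as discussed before the ISSp definition, no finite $M$ and $\epsilon$ can work as $K \to \infty$ for genuinely unbounded disturbances.
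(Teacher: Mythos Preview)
Your proof is correct and follows essentially the same approach as the paper: both fix $\epsilon$ and $K$, invoke the ISSp definition, and then use the monotonicity of $\beta(\norm{\mb{x}_0}_{\ellp},\cdot)$ in its second argument to replace the time-varying bound by the constant $\beta(\norm{\mb{x}_0}_{\ellp},0)+\gamma(\norm{\mathcal{D}}_{L^p})$ (which the paper calls $B_0$ and you call $M$).
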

\begin{proof}
    If the system is ISSp w.r.t. $L^p$, then for any $K \in \mathbb{N}_{\geq 0}$ and $ \epsilon \in (0, 1)$, there exist $\beta \in \mathcal{KL}$ and $ \gamma \in \mathcal{K}$ such that
    \begin{align*}
        \mathbb{P}\bigg\{\norm{\mb{x}_k}_{\ellp} \leq \beta(\norm{\mb{x}_0}_{\ellp}, k) + \gamma\left(\Vert\mathcal{D} \Vert_{L^p}\right), \; \forall k \leq K \bigg\} \geq 1 - \epsilon.
    \end{align*}
    Then, since $\beta$ is decreasing in $k$, we have that for $B_0 \triangleq  \beta(\norm{\mb{x}_0}_{\ellp}, 0) + \gamma\left(\Vert \mathcal{D} \Vert_{L^p}\right),$
    \begin{align}
        \mathbb{P}\left\{\norm{\mb{x}_k}_{\ellp} \leq B_0, \; \forall k \leq K \right\} \geq 1 - \epsilon.
    \end{align}
    Thus the system trajectories are bounded in probability.
\end{proof}

\noindent Like with traditional ISS, the ISSp condition \ref{eq:issp} is equivalent to system trajectories remaining in a ball whose radius scales with the initial condition and the norm of the disturbance. Thus, if a system is ISSp, its trajectories (over a finite horizon) must be bounded in probability. 

Finally, we look to discuss the relationship between ISSp and recurrence. To do this we, first restate an important result from drift analysis (see \cite{lengler2020drift} for a detailed survey).

\begin{theorem}[Variable Drift \cite{lengler2020drift}]
\label{thm:variable_drift}
Suppose there exists some function $V: \mathcal{X} \to \mathbb{R}_{\geq 0}$, with $\gamma$-sublevel set $V_\gamma \triangleq \left\{ \mb{x} \in \mathcal{X} \mid V(\mb{x}) \leq \gamma \right\}$ such that for all $\mb{x} \in \mathcal{X} \setminus V_\gamma,$
\begin{align}
    \E{V\left(\mb{f}(\mb{x}, \mb{d})\right) - V(\mb{x})} \leq -h(V(\mb{x})), \label{eq:variable_drift}
\end{align}
for some increasing function $h: \mathbb{R}_{>0} \to \mathbb{R}_{>0}$. Then, for any trajectory with initial state $\mb{x}_0,$ the hitting time $\tau_\gamma(\mb{x}_0) = \inf\{k \mid V(\mb{x}_k) \leq \gamma \}$ is bounded in expectation by
\begin{align}
    \E{\tau_\gamma(\mb{x}_0)} \leq \frac{\gamma}{h(\gamma)} + \int_{\gamma}^{V(\mb{x}_0)} \frac{1}{h(\sigma)}d\sigma.
\end{align}
\end{theorem}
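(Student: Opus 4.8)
The plan is to reduce this variable-drift bound to an additive-drift (optional-stopping) argument via a carefully chosen change of potential. First I would introduce a transform $g:\R_{\geq 0}\to\R_{\geq 0}$ defined so that its value at $V(\mb{x}_0)$ is exactly the claimed bound: set $g(v) = \frac{\gamma}{h(\gamma)} + \int_\gamma^v \frac{1}{h(\sigma)}\,d\sigma$ for $v \geq \gamma$, and extend it linearly below the target level by $g(v) = v/h(\gamma)$ for $0 \leq v < \gamma$. Because $h$ is increasing, $g'(v) = 1/h(v)$ is non-increasing on $(\gamma,\infty)$ and equals the constant $1/h(\gamma)$ on $[0,\gamma)$, so $g$ is continuous, nondecreasing, and concave on all of $\R_{\geq 0}$. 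The linear extension below $\gamma$ is the key bookkeeping device: it guarantees global concavity even when a trajectory overshoots the sublevel set $V_\gamma$ in a single step.

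The core estimate I would establish is that, conditioned on any $\mb{x}_k = \mb{x} \in \mathcal{X}\setminus V_\gamma$, the transformed potential drops by at least one in expectation. Using concavity of $g$ together with Jensen's inequality, then the drift hypothesis \eqref{eq:variable_drift} together with monotonicity of $g$, I obtain $\E{g(V(\mb{x}_{k+1})) \mid \mb{x}_k} \leq g\big(\E{V(\mb{x}_{k+1}) \mid \mb{x}_k}\big) \leq g\big(V(\mb{x}) - h(V(\mb{x}))\big)$. Since a concave function has secant slope bounded below by its right-endpoint derivative, $g(V(\mb{x})) - g(V(\mb{x}) - h(V(\mb{x}))) \geq h(V(\mb{x}))\,g'(V(\mb{x})) = h(V(\mb{x}))/h(V(\mb{x})) = 1$. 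Combining the two displays yields the additive-drift inequality $\E{g(V(\mb{x}_{k+1})) \mid \mb{x}_k} \leq g(V(\mb{x}_k)) - 1$ on the complement of $V_\gamma$.

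With this in hand, the final step is a standard optional-stopping argument. I would define the stopped process $Y_k \triangleq g(V(\mb{x}_{k \wedge \tau_\gamma})) + (k \wedge \tau_\gamma)$ and verify it is a nonnegative supermartingale: before the hitting time the $-1$ drift of $g(V(\mb{x}_k))$ is exactly cancelled by the $+1$ from the step counter, and after $\tau_\gamma$ the process is frozen. Applying the supermartingale property up to time $k$, taking expectations, and letting $k \to \infty$ (using $Y_k \geq k \wedge \tau_\gamma$ together with monotone convergence) gives $\E{\tau_\gamma(\mb{x}_0)} \leq \E{Y_0} = g(V(\mb{x}_0))$, which is precisely the stated bound.

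I expect the main obstacle to be the technical care at the boundary and in the limit exchange, rather than the algebra. Specifically, the secant-slope inequality must be justified for one-step transitions that cross below $\gamma$ (which is exactly why $g$ is extended linearly so as to remain concave on all of $\R_{\geq 0}$), and the passage $\E{Y_{k\wedge\tau_\gamma}} \to \E{\tau_\gamma}$ requires either a monotone/uniform-integrability justification or an a priori argument that $\tau_\gamma < \infty$ almost surely; establishing integrability of the hitting time cleanly is the most delicate point. The remaining verifications (concavity, monotonicity, and the Jensen step) are routine given the construction of $g$.
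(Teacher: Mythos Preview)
The paper does not prove this theorem; it is quoted verbatim from the cited survey \cite{lengler2020drift} and used as a black box in the subsequent recurrence argument. Your proposal is correct and is in fact the standard proof one finds in that reference: transform the potential by $g(v)=\gamma/h(\gamma)+\int_\gamma^v h(\sigma)^{-1}\,d\sigma$ (extended linearly below $\gamma$ to preserve concavity), use Jensen plus the drift hypothesis to obtain an additive $-1$ drift for $g\circ V$, and finish with an optional-stopping/monotone-convergence argument on $Y_k = g(V(\mb{x}_{k\wedge\tau_\gamma})) + (k\wedge\tau_\gamma)$.

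Two small remarks on the points you flagged as delicate. First, the limit step is not actually subtle: $k\wedge\tau_\gamma$ is nondecreasing in $k$ and bounded in expectation by $g(V(\mb{x}_0))$, so monotone convergence immediately yields $\E{\tau_\gamma}\leq g(V(\mb{x}_0))<\infty$ without any prior assumption that $\tau_\gamma<\infty$ almost surely (that conclusion then follows for free). Second, the only implicit regularity needed is that $V(\mb{x})-h(V(\mb{x}))\geq 0$ on $\mathcal{X}\setminus V_\gamma$, but this is forced by the drift hypothesis itself since $\E{V(\mb{f}(\mb{x},\mb{d}))}\geq 0$; so the secant-slope step never leaves the domain of $g$.
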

Using this result, we can show that, if a system admits an E-ISSp Lyapunov function, then any Lyapunov sublevel set (above a particular value) must be recurrent. 
\begin{theorem}
If there exists an E-ISSp Lyapunov function w.r.t. $L^p$ for system \eqref{eq:dt_dynamics}, then \eqref{eq:dt_dynamics} is recurrent.
\end{theorem}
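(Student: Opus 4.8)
The plan is to apply the variable drift theorem (Theorem~\ref{thm:variable_drift}) to a sublevel set of the E-ISSp Lyapunov function $V$. Recall that, with $\kappa_3(r)=\alpha r$, the drift condition~\eqref{eq:delta_v} reads
\begin{align*}
    \E{V(\mb{f}(\mb{x},\mb{d})) - V(\mb{x})} \leq -\alpha V(\mb{x}) + \varphi,
\end{align*}
where $\varphi \triangleq \kappa_4(\Vert\mathcal{D}\Vert_{L^p}) \geq 0$ is the same constant used in the proof of Theorem~\ref{thm:lyap}. The key observation is that, although the additive term $\varphi$ keeps the drift from being negative everywhere, it is dominated by $-\alpha V(\mb{x})$ once $V(\mb{x})$ is sufficiently large. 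So I would show that a large-enough Lyapunov sublevel set is recurrent, and that this set is bounded.

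First I would fix a threshold $\gamma > 2\varphi/\alpha$ (any $\gamma>0$ suffices in the degenerate case $\varphi=0$) and consider the sublevel set $V_\gamma = \{\mb{x}\in\mathcal{X} \mid V(\mb{x}) \leq \gamma\}$. For every $\mb{x}\in\mathcal{X}\setminus V_\gamma$ we have $V(\mb{x}) > \gamma > 2\varphi/\alpha$, hence $\varphi < \tfrac{\alpha}{2}V(\mb{x})$, and the drift bound collapses to
\begin{align*}
    \E{V(\mb{f}(\mb{x},\mb{d})) - V(\mb{x})} \leq -\alpha V(\mb{x}) + \varphi \leq -\tfrac{\alpha}{2} V(\mb{x}) \triangleq -h(V(\mb{x})).
\end{align*}
The function $h(r)=\tfrac{\alpha}{2}r$ is increasing and strictly positive on $\mathbb{R}_{>0}$, so it meets the hypotheses of Theorem~\ref{thm:variable_drift}. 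Applying that theorem to $V$ and $V_\gamma$ yields a finite bound on the expected hitting time,
\begin{align*}
    \E{\tau_\gamma(\mb{x}_0)} \leq \frac{\gamma}{h(\gamma)} + \int_\gamma^{V(\mb{x}_0)} \frac{1}{h(\sigma)}\,d\sigma = \frac{2}{\alpha}\left(1 + \ln\frac{V(\mb{x}_0)}{\gamma}\right),
\end{align*}
which is finite for every $\mb{x}_0$ since $V$ is continuous and hence finite-valued. A nonnegative random variable with finite expectation is almost surely finite, so $\mathbb{P}\{\tau_\gamma(\mb{x}_0)<\infty\}=1$ for all $\mb{x}_0$; since this holds from every initial state, the strong Markov property also gives the informal ``infinitely often'' property.

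It then remains to verify $V_\gamma$ is bounded, which follows at once from the lower bound in~\eqref{eq:v_bounds}: for $\mb{x}\in V_\gamma$ we have $\kappa_1(\Vert\mb{x}\Vert_{\ellp}) \leq V(\mb{x}) \leq \gamma$, and since $\kappa_1\in\mathcal{K}_\infty$ is invertible, $\Vert\mb{x}\Vert_{\ellp} \leq \kappa_1^{-1}(\gamma) < \infty$. Thus $V_\gamma$ is a bounded recurrent set and the system is recurrent. The main obstacle I anticipate lies entirely in the reformulation step: the additive disturbance term $\varphi$ means the raw drift is not negative near the equilibrium, so one cannot use the whole state space as the domain of negative drift. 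Choosing $\gamma$ strictly above $2\varphi/\alpha$ is precisely what purchases a clean, everywhere-positive, increasing $h$ for the variable drift theorem; the remaining steps (convergence of the logarithmic integral, finite expectation $\Rightarrow$ almost-sure finiteness, and boundedness of the sublevel set) are routine.
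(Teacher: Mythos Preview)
Your proposal is correct and follows essentially the same approach as the paper: apply the variable drift theorem (Theorem~\ref{thm:variable_drift}) to a sufficiently high sublevel set of $V$, conclude finite expected hitting time, hence almost-sure finiteness, and use $\kappa_1\in\mathcal{K}_\infty$ to see the sublevel set is bounded. The only cosmetic difference is that the paper takes $h(r)=\alpha r-\varphi$ with $\gamma>\varphi/\alpha$, whereas you absorb $\varphi$ into a weaker linear bound $h(r)=\tfrac{\alpha}{2}r$ at the price of the slightly larger threshold $\gamma>2\varphi/\alpha$; your choice makes the integral a touch cleaner but is otherwise the same argument.
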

\begin{proof}
    Suppose there exists an E-ISSp Lyapunov function $V$ for the system \eqref{eq:dt_dynamics}. Then, for $h(V(\mb{x}_k)) = \alpha V(\mb{x}_k) - \varphi,$ with $\varphi \triangleq \kappa_4(\Vert \mathcal{D} \Vert_{L^p})$, we have  
    \begin{align}
        \E{V(\mb{x}_{k+1}) - V(\mb{x}_k) \mid \mb{x}_k} \leq -h(V(\mb{x}_k)).
    \end{align}
    For any $\gamma > \frac{\varphi}{\alpha}$,  $h(V(\mb{x})) > 0$ for all $\mb{x} \in \mathcal{X} \setminus V_\gamma;$ thus our system meets the variable drift condition \eqref{eq:variable_drift}. 

    Thus, consider some trajectory with an initial state $\mb{x}_0 \in \mathcal{X} \setminus V_\gamma.$ Then, by Theorem \ref{thm:variable_drift}, we have 
    \begin{align}
        \E{\tau_\gamma(\mb{x}_0)} &\leq \frac{\gamma}{\alpha \gamma - \varphi} + \int_{\gamma}^{V(\mb{x}_0)} \frac{1}{\alpha \sigma - \varphi} d\sigma\\
        &\leq \frac{\gamma}{\alpha \gamma - \varphi} + \frac{1}{\alpha}\log\left(\frac{\alpha V(\mb{x}_0) - \varphi}{\alpha \gamma - \varphi}\right) < \infty.
    \end{align}
    Since $\E{\tau_\gamma(\mb{x}_0)} < \infty, $ we must have $\mathbb{P}\{\tau_\gamma(\mb{x}_0) < \infty\} = 1.$ Thus, for any $\gamma > \frac{\alpha}{\varphi},$ the sublevel set $V_\gamma$ is recurrent. Since $\kappa_1$ is radially unbounded, $V_\gamma$ must be bounded for all $\gamma\geq 0 $, thus the system is recurrent. 
\end{proof}






\cp 
\section{Practical Example: Linear-Quadratic-Gaussian Control}
\begin{figure}[t]
     \centering
     \begin{subfigure}[b]{0.375\textwidth}
         \centering
         \includegraphics[width=\textwidth]{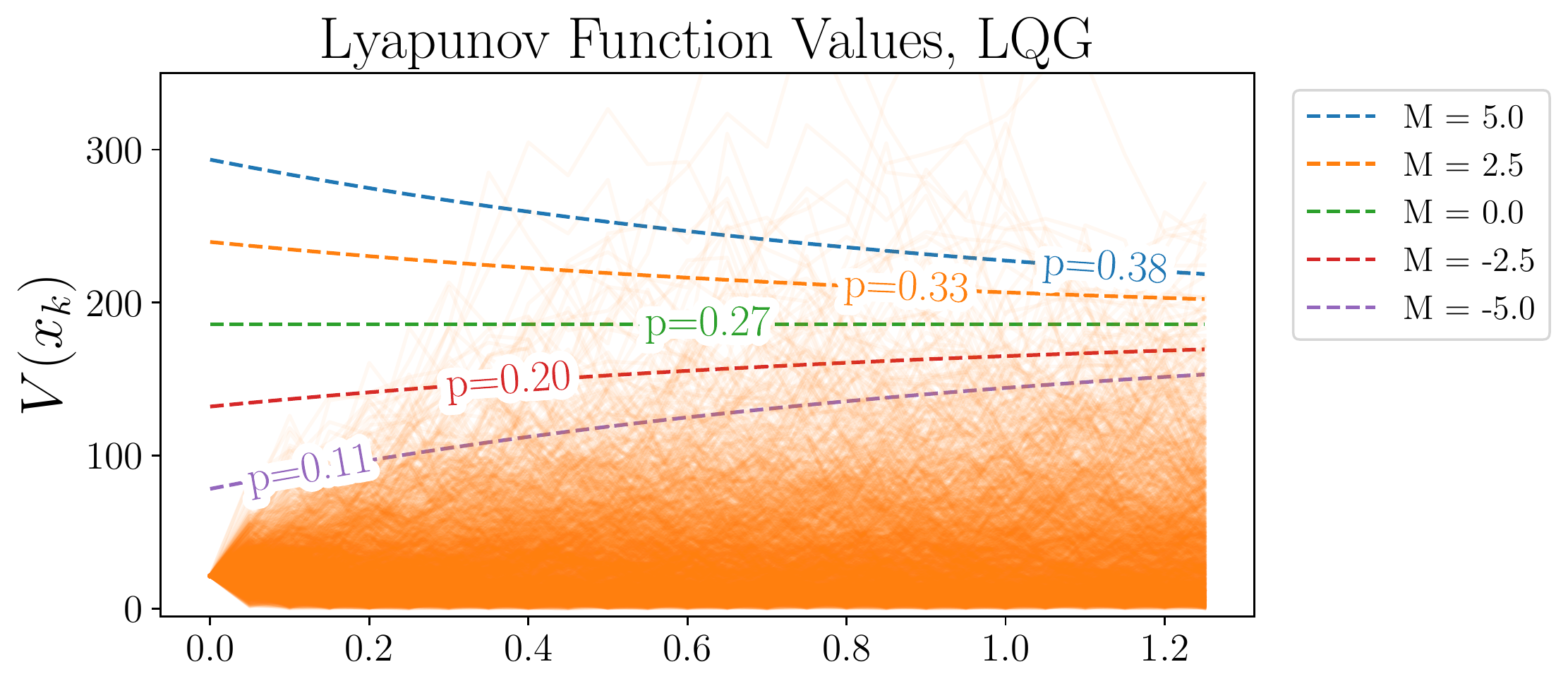}
     \end{subfigure}
     \\
     \begin{subfigure}[b]{0.375\textwidth}
         \centering
         \includegraphics[width=\textwidth]{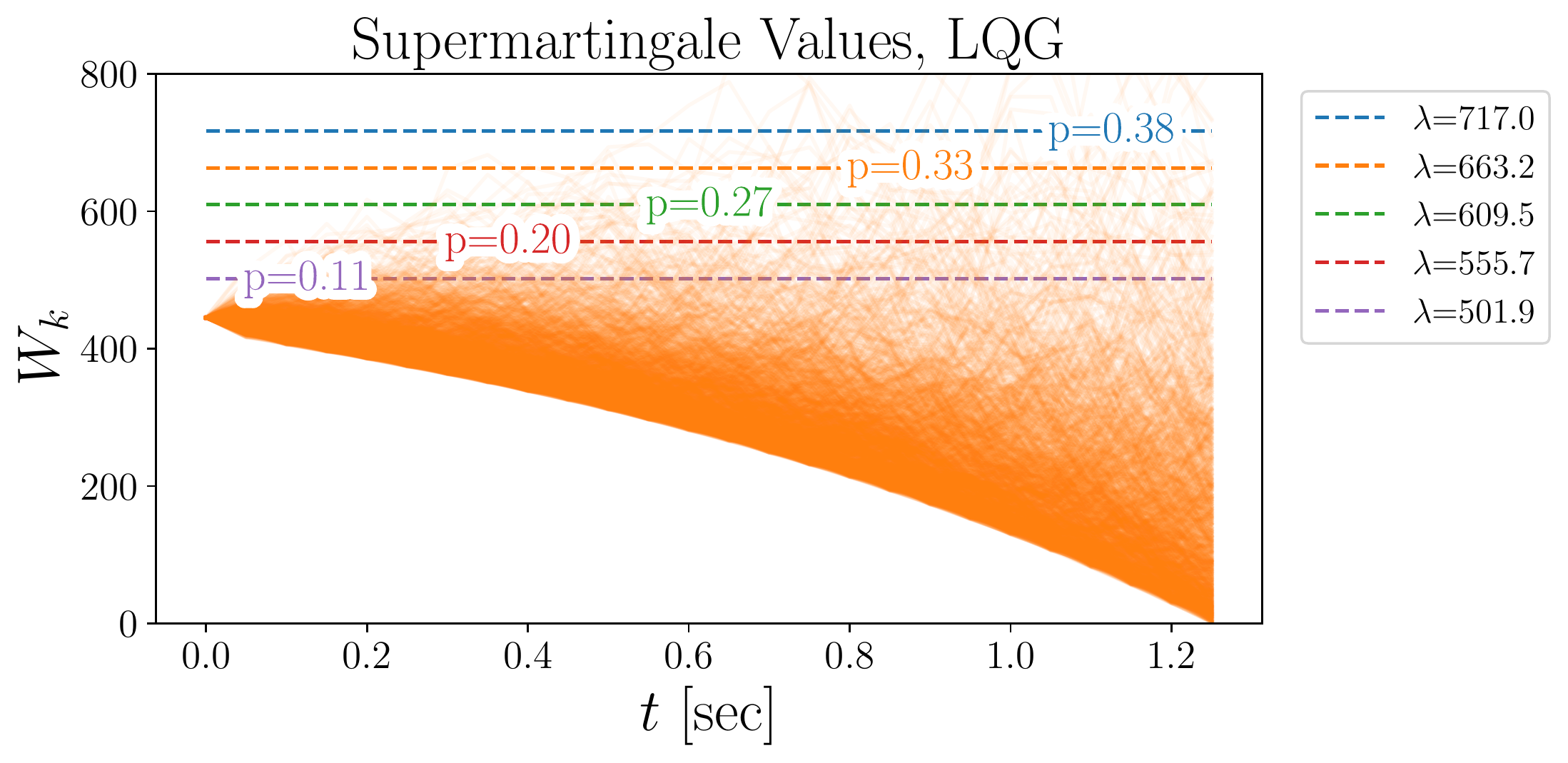}
     \end{subfigure}
     \\
    \begin{subfigure}[b]{0.18\textwidth}
         \centering
         \includegraphics[width=\textwidth]{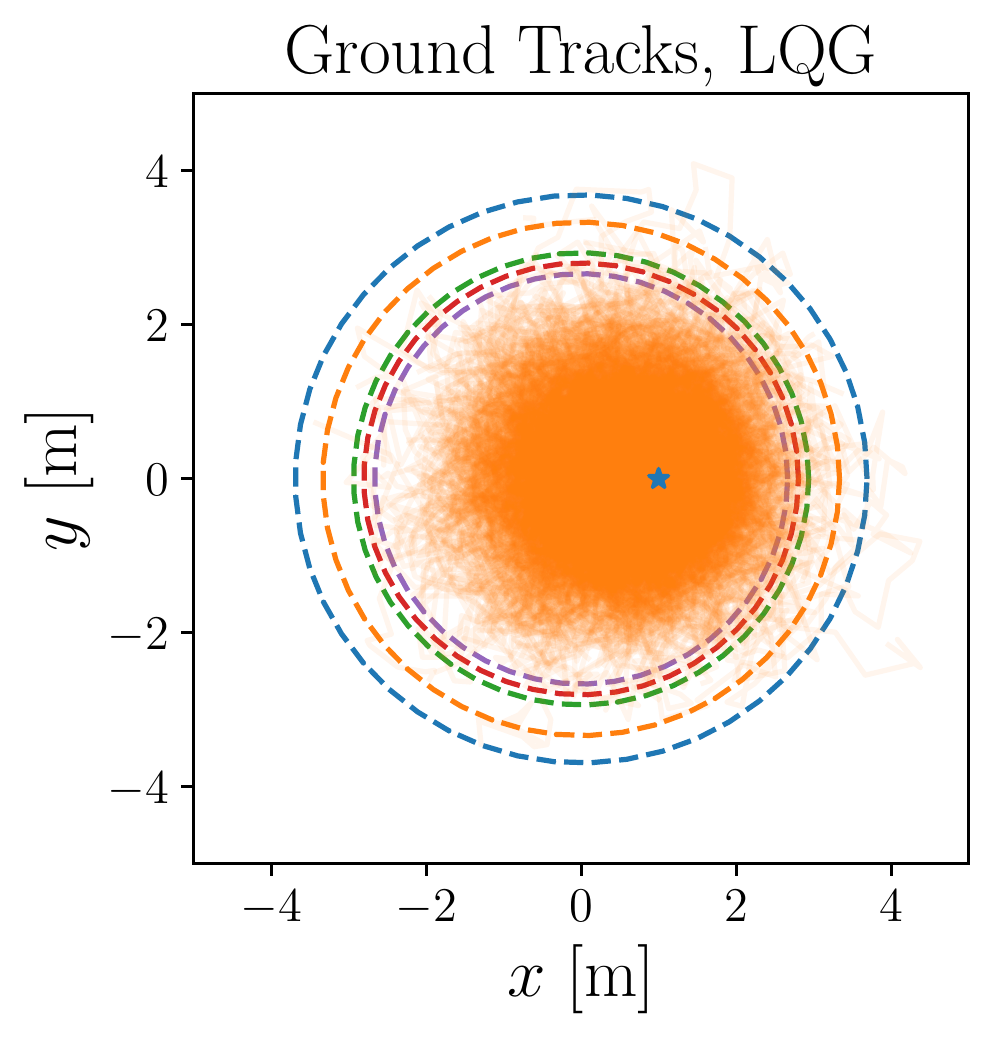}
     \end{subfigure}
     \;
     \begin{subfigure}[b]{0.18\textwidth}
         \centering
         \includegraphics[width=\textwidth]{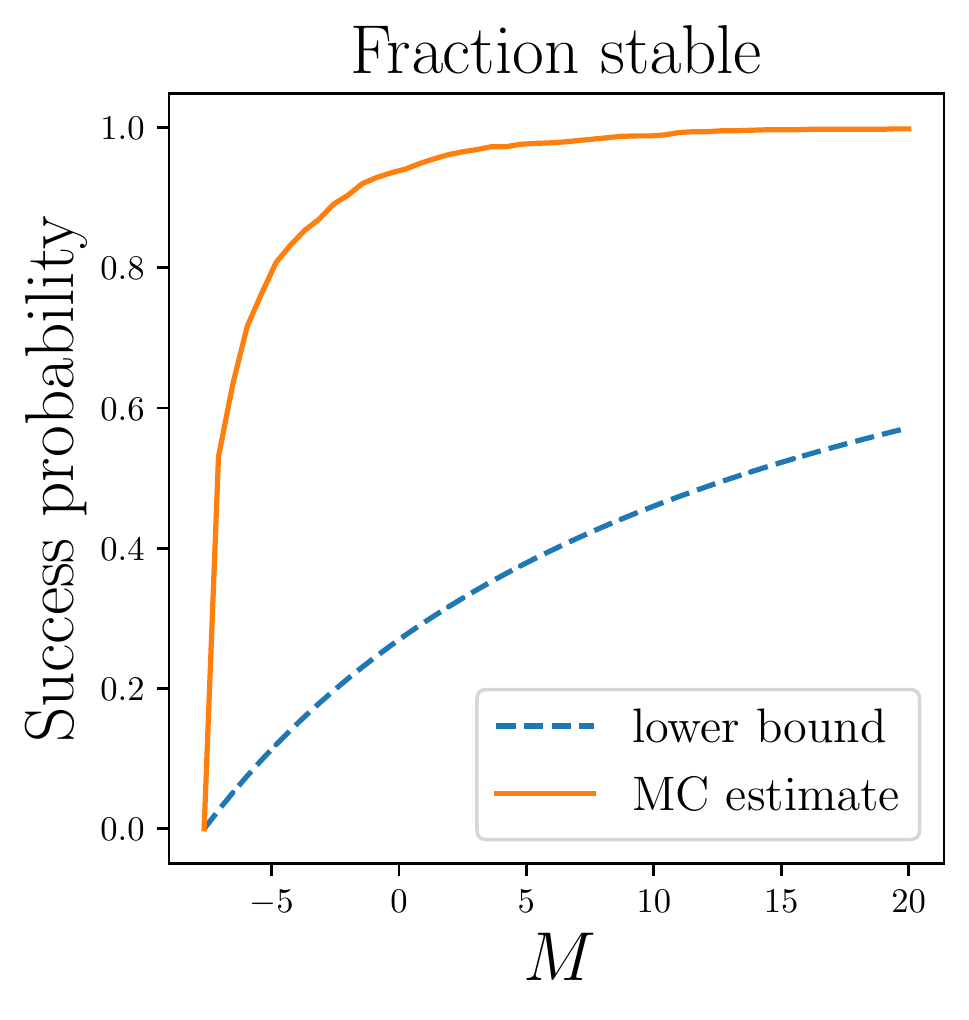}
     \end{subfigure}
     \caption{Simulation results for the double integrator over 1500 trials. \textbf{(Top):} Lyapunov function values $V(\mb{x}_k)$ plotted against the upper bound trajectory for various $M$. \textbf{(Middle):} Corresponding supermartingale values $W_k$ for each trajectory. Note the geometric upper bound trajectories for $V(\mb{x})$ correspond one-to-one to the level sets of $W_k$. \textbf{(Bottom left):} Ground tracks for the double integrator position. \textbf{(Bottom right):} Monte Carlo estimation of the ``success fraction'' (fraction of trajectories with $V(\mb{x}_k) \leq \rho_k$) versus the guaranteed lower bound.}
     \label{fig:lqg}
     \vspace{-2em}
\end{figure}
\label{sec:LQG}
We now consider the ISSp properties of some practical systems, and study the validity of our exit probability bounds via simulation. We begin by considering the case of linear time-invariant systems subject to additive, zero-mean Gaussian noise. The system dynamics are given by $\mb{x}_{k+1} = \mb{A} \mb{x}_k + \mb{B} \mb{u}_k + \mb{d}_k,$ where $\mb{u}_k \in \mathbb{R}^m$ is a control input to the system, and $\mb{d}_k \overset{\text{i.i.d.}}{\sim} \mathcal{N}(\mb{0}, \boldsymbol{\Sigma}) = \mathcal{D}$ for some $\boldsymbol{\Sigma} = \boldsymbol{\Sigma}^T > 0 $. Suppose our system implements an infinite-horizon LQR feedback policy, i.e., $\mb{u}_k = - \mb{K} \mb{x}_k,$ where $\mb{K} = (\mb{R} + \mb{B}^T \mb{P} \mb{B})^{-1} (\mb{B}^T \mb{P} \mb{A})$ for $\mb{P}$ satisfying the discrete-time algebraic Riccati equation.
For this closed-loop system, $V(\mb{x}) = \mb{x}^T \mb{P} \mb{x}$ is an E-ISS Lyapunov function, with 
\begin{align}
    \E{V(\mb{x}_{k+1}) - V(\mb{x}_k)} \leq -\alpha V(\mb{x}_k) + \sigma \norm{\mathcal{D}}_{L^2}^2
\end{align}
for $\alpha = \frac{\lambda_\text{min}(\mb{Q})}{\lambda_\text{max}(\mb{P})},$ $\sigma = \lambda_\text{max}(\mb{P})$, and any $ \mathcal{D} \in L^2$. In particular, if we pick $\lambda = M V(\mb{x}_0) + \frac{\sigma \norm{\mathcal{D}}_{L^2}^2}{\alpha (1 - \alpha)^K},$ we can bound the probability $V_k$ rises above a time-varying trajectory:
\begin{align}
    \mathbb{P}\bigg\{V(\mb{x}_k) &\leq \rho_k, \; \forall k \leq K \bigg\} \geq 1 - \frac{W_0}{\lambda},
    \label{eq:ville_lqg}
\end{align}
with $\rho_k = M V(\mb{x}_0) (1-\alpha)^k + \frac{\sigma\norm{\mathcal{D}}_{L^2}^2}{\alpha}$, and $W_0$ defined as in \eqref{eq:w_def}. Specifically, we study how this bound varies numerically for a double integrator system in the plane (see \cite{cosner_robust_2023} for a detailed dynamics derivation). Figure \ref{fig:lqg} plots the results of 1500 simulations of the double integrator. First, we plot the values of $V(\mb{x}_k)$ and $W_k$ across multiple choices of $M.$ We can see that the trajectories $W_k = \lambda$ correspond exactly to $V(\mb{x}_k) = \rho_k$; thus, intuitively, the event $W_k > \lambda$ is exactly the event where $V_k$ rises above a shifted geometric sequence $\rho_k$. 

We also plot the trajectories of the system in the plane, along with Lyapunov level sets $V_\rho$ for $\rho = \max_k \rho_k$, evaluated when the system's velocity is zero. We choose the maximum value of $\rho_k$ since if $V_k \leq \rho_k$ for all $k \leq K$, we must have $V_k \leq \max_k \rho_k$. Interestingly, one can show the probability bound \eqref{eq:ville_lqg} is equivalent to the exit probability bound provided by Kushner \cite{kushner_stochastic_1967}, up to a choice of scaling to construct $W_k$ (we explore this point further in Section \ref{sec:walking}).

Finally, we plot our bound \eqref{eq:ville_lqg} on the ``success probability'' $\mathbb{P}\{ V_k \leq \rho_k, \; \forall k \leq K\}$ versus the fraction of trajectories that remained under $\rho_k$ for various choices of $M$. While we note that our bound is, indeed, a lower bound on the success probability, it is quite a weak lower bound; thus, finding Lyapunov functions and martingales that yield stronger bounds is an interesting direction for future work. 

\section{Practical Example: Seven-Link Walker}
\label{sec:walking}
Consider the seven-link walker as shown in Fig. \ref{fig:walker-config}. As detailed in \cite{tucker2023input}, walking can be distilled down to the discrete-time dynamical system described by the Poincar\'e return map:
\begin{align}
    \mathscr{P} : & B_{\rho}(\mb{x}^*) \times [d^-,d^+] \partialto S_{ [d^-,d^+]} : = \bigcup_{d \in [d^-,d^+]} S_d, \nonumber\\
    & \mb{x}_{k+1} = \mathscr{P}(\mb{x}_k,d_k), \qquad   d_k \in  [d^-,d^+],
    \label{eq: discretepoincare}
\end{align}
for some sequence of step heights $d_k \in [d^-,d^+] \subset \mathbb{R}$, $k \in \mathbb{N}_{\geq 0}$ and $S_d \subset \R^n$ denoting the uncertain guard condition:
\begin{align}
    S_d &= \{\mb{x}\in \R^n \mid h(\mb{x}) = d, ~\dot h(\mb{x}) < 0\},
    \label{eq: heightguard}
\end{align}
where $h: \R^n \to \R$ is typically selected to denote the vertical height of the swing foot relative to the stance foot. Note that the partial function nature of $\mathscr{P}$ implies that solutions may not exist for all time, i.e., the solution $\mb{x}_k$ might leave the ball $B_{\rho}(\mb{x}^*)$ on which $\mathscr{P}$ is well-defined.

\begin{figure}[tb]
    \centering
    \includegraphics[width=\linewidth]{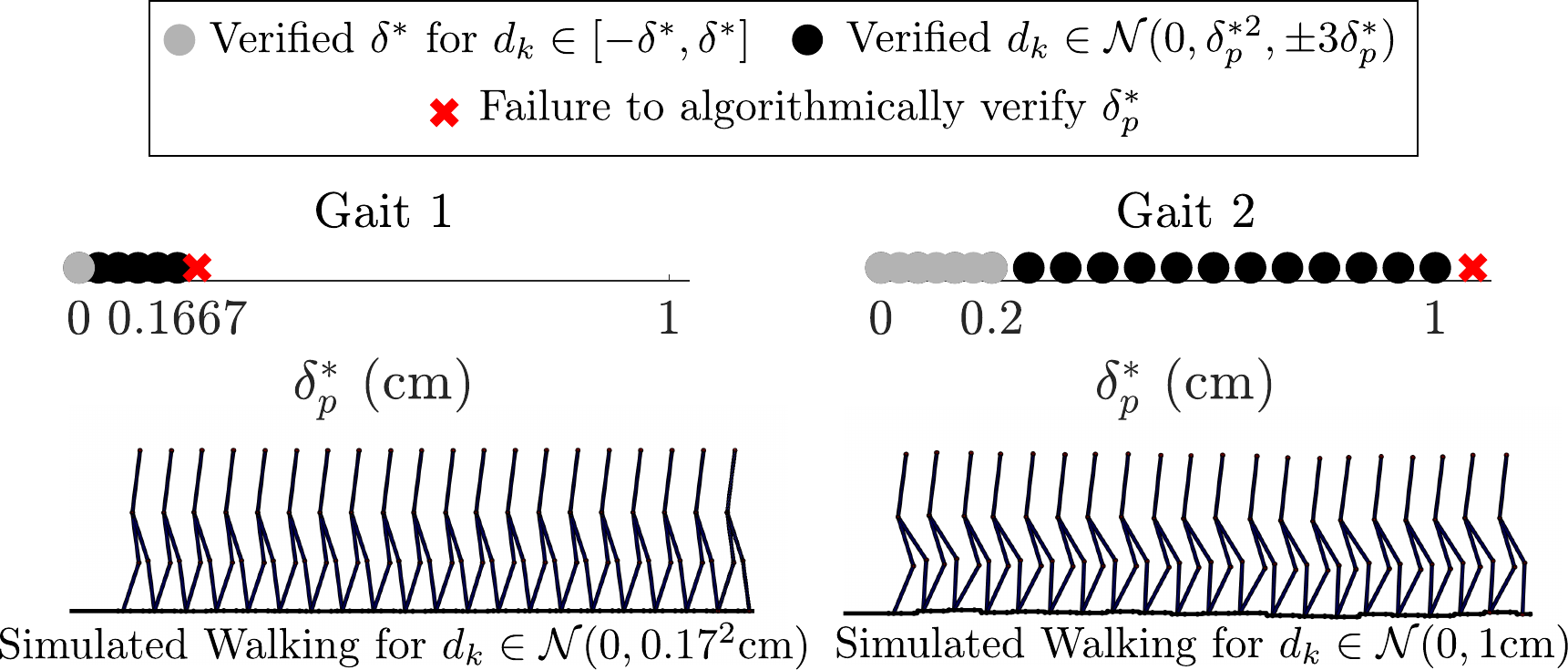}
    \caption{Algorithmic results of ISSp compared to ISS. As shown, the stochastic ISSp condition yields more realistic predictions of the tolerable step heights for two gaits (the same gaits as those compared in \cite{tucker2023input}).}
    \label{fig: walkingalg}
     \vspace{-2em}
\end{figure}

\newsec{Guaranteed Robustness to Uncertain Terrain} Prior work considered input-to-state stability of \eqref{eq: discretepoincare} with bounded step heights described as the set $\mathbb{D} \triangleq [-\delta,\delta] \subset \mathbb{R}$ with $\delta > 0$ \cite{tucker2023input}. Moreover, leveraging this discrete-time representation for bounded disturbances \cite{tucker2023input} introduced an ISS perspective on bipedal locomotion. Explicitly, a periodic walking gait with a nominal fixed point $\mb{x}^* = \mathscr{P}(\mb{x}^*,0)$ is defined as $\delta$-robust
for a given $\delta>0$ if for the discrete-time dynamical system \eqref{eq: discretepoincare}, with any $d_k \in [-\delta,\delta]$, there exists some forward invariant set $\mathcal{W} \subset B_{\rho}(\mb{x}^*)$ such that for all $\mb{x}_0 \in \mathcal{W}$, the system is ISS. Moreover, this definition of robustness was shown to be verifiable through an ISS Lyapunov function. 


Specifically, Theorem 2 of \cite{tucker2023input} states that if the Lyapunov condition as in \eqref{eq:iss_lyap} is satisfied, then the periodic gait is ISS.
To verify this, a candidate Lyapunov function can be synthesized by approximating the exponentially stable discrete-time system using the linearization of the Poincar\'e return map for $d_k = 0$:
$$
\mb{x}_{k + 1} = \mb{A} \mb{x}_k  := D\mathscr{P}(0,0) \mb{x}_k.
$$
Then, the Lyapunov matrix $\mb{P} = \mb{P}^T > 0$ is obtained by solving the discrete-time Lyapuov equation 
($\mb{A}^T \mb{P} \mb{A} - \mb{P} = - \mb{Q}$) 
for $\mb{Q} = \mb{Q}^T > 0$ which provides a discrete-time Lyapunov function $V(\mb{x}) = \mb{x}^T \mb{P} \mb{x}$.

\newsec{Probabilistic Robustness to Uncertain Terrain} To obtain more reasonable estimates of the maximum step heights that a given periodic gait can withstand, in this work we will instead consider step heights drawn from some distribution $d_k \sim \mathcal{D}$ and apply the ISSp methodology. Specifically, we take $\mathcal{D} := \mathcal{N}(0,\delta_p^2)$ such that $\delta_p > 0$ now represents the standard deviation of the distribution. Note that due to the partial nature of the $\mathscr{P}$, we will truncate $\mathcal{D}$ at $3\delta_p$ to ensure that there exists some $\delta_p$ such that $\mathbb{E} [\Delta V] < + \infty$, we will denote this truncated Gaussian as $\mathcal{N}(\cdot,\cdot,\pm a)$ where $a$ denotes the truncation interval \cite{grimmett_probability_2020}.

\begin{figure}[tb]
    \centering
    \includegraphics[width=0.99\linewidth]{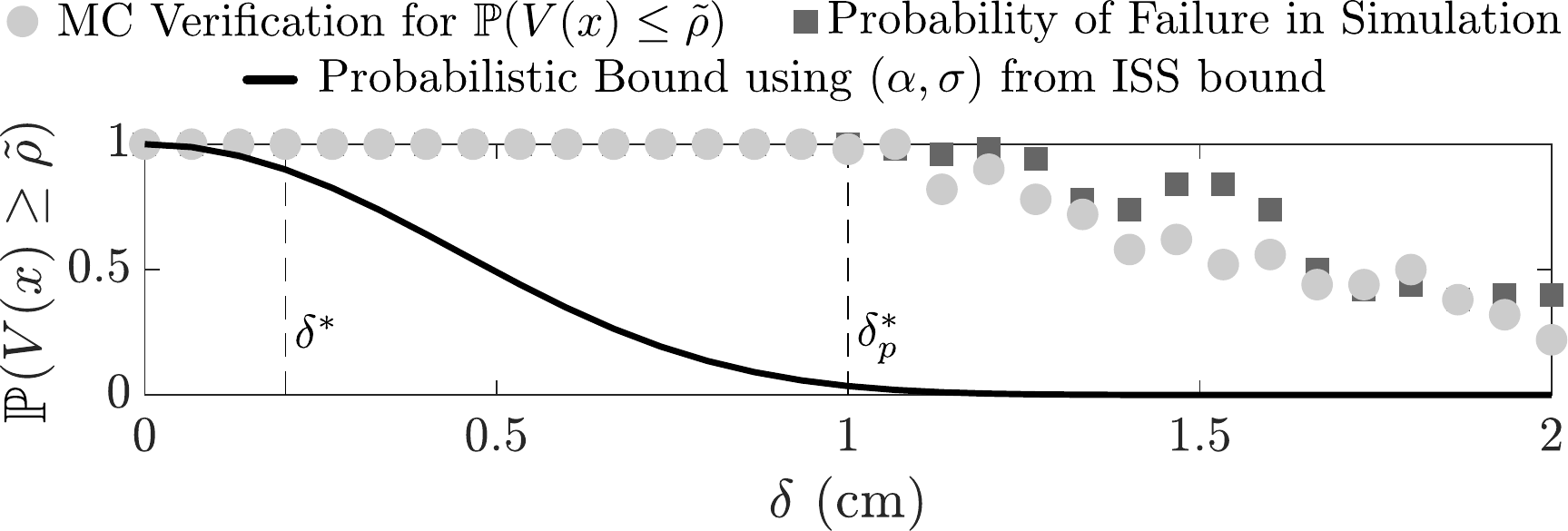}
    \caption{The probabilistic bound, evaluated for Gait 2, quickly decreases as $\delta$ increases. However, the simulation results show that the walking is able to remain periodic for all values determined to be $\delta$-robust in Opt. \ref{eq: optdelta}, highlighting that ISSp yields more reasonable estimates of $\delta$ compared to the strict ISS condition presented in \cite{tucker2023input}.}
    \label{fig: probbound2}
     \vspace{-4mm}
\end{figure}

Consider the ISSp Lyapunov condition from Def. \eqref{def:issp_lyap}:
\begin{align}
    \mathbb{E} [V(\mathscr{P}(\mb{x},d)) -V(\mb{x}) \mid \mb{x}]  \leq -\tilde{\alpha} V(\mb{x}) + \tilde{\sigma},
    \label{eq: ExpLyapWalking2}
\end{align}
with $\tilde{\alpha} = 2k\lambda_{\max}(\mb{P})$, and $\tilde{\sigma} = k(\chi\delta_p)^c$, where $k \in (0,1)$ is a user-defined variable dictating the convergence of the Lyapunov condition and $\chi > 0$ is used to scale the set over which the expected exponential decay condition holds. Note that this condition can be equivalently expressed in the form:
\begin{align}
\|\mb{x} &- \mb{x}^*\| \geq \chi \delta_p \quad \Longrightarrow \notag \\
    &\mathbb{E}
    [V(\mathscr{P}(\mb{x},d)) -V(\mb{x}) \mid \mb{x}]  \leq -k \|\mb{x}-\mb{x}^*\|^c.
    \label{eq: ExpLyapWalking}
\end{align}

This ISSp Lyapunov function can be utilized with the algorithmic approach introduced in \cite{tucker2023input} to solving the following optimization problem:
\begin{align}
    \label{eq: optdelta}
    (\delta_p^*,\chi_p^*) = \argmax_{\delta_p, \chi_p > 0} & ~ \delta_p \\
    \text{s.t. }  &  \mathbb{E}_{d \sim \mathcal{D}} [V(\mathscr{P}(\mb{x},d)) -V(\mb{x}) \mid \mb{x}]  \leq - k\|  \mb{x}   \|^2  \notag \\
                  & \quad \forall ~ \|\mb{x}\| = \chi_p \delta_p, \quad ~ d \sim \mathcal{N}(0,\delta_p^2, \pm3\delta_p), \notag
\end{align}
As shown in Fig. \ref{fig: walkingalg}, the algorithmic approach to the updated optimization \eqref{eq: optdelta} results in more reasonable estimates of the maximum tolerable step height for each of the two gaits considered in \cite{tucker2023input}.

\newsec{Probabilistic Guarantees for ISSp}
While relaxing the Lyapunov condition to the one in \eqref{eq: optdelta} yields more realistic estimates of $\delta^*$, this relaxed condition no longer satisfies the assumptions needed to be provably ISS. Instead, we can use probabilistic bounds to assert that the system is ISSp. 

To do this, we first need to approximate a reasonable estimate of the Lyapunov level set that bounds the evolution of the system after $K$ steps. 
Rearranging the Lyapunov condition \eqref{eq: ExpLyapWalking2} for the largest $\chi_p^*$ and $\delta_p^*$ identified by Opt. \eqref{eq: optdelta}, and using the fact that $ \mathcal{W} \triangleq \{\mb{x} \mid V(\mb{x}) \leq \kappa_2(\chi \delta)^c \}$ in Theorem 2 of \cite{tucker2023input}, we obtain the Lyapunov bound:
\begin{align*}
    V(\mb{x}_K) \leq \tilde{\rho} \triangleq (1-\alpha)^K \lambda_{\max}(P)(\chi_p^*\delta_p^*)^2 + k(\chi_p^* \delta_p^*)^2.
\end{align*}

As discussed in Sec. \ref{sec:LQG}, the associated probabilistic bound associated with remaining within this Lyapunov level set can be obtained from Kushner \cite{kushner_stochastic_1967}. Importantly, when your Lyapunov level set $\tilde{\rho}$ is less than $\tilde{\sigma}/\alpha$, the bound \eqref{eq:ville_lqg} is extremely conservative. Following Kushner, it is possible to find a better choice of $W_k$ that yields a better probability bound. Specifically, one can use the bound:
\begin{align}
    \mathbb{P}&(V(\mb{x}_K) \leq \tilde{\rho}), ~\forall k \leq K\} \nonumber\\
    & \geq
    \begin{cases}
        \frac{\Tilde{\rho} - V(\mb{x}_0)}{\Tilde{\rho}} \left( \frac{\Tilde{\rho} - \Tilde{\sigma} }{\Tilde{\rho}}  \right)^K, & \Tilde{\rho} \geq \frac{\Tilde{\sigma}}{\alpha} \\
        1 - \frac{V(\mb{x}_0)(1 - \alpha)^K + \Tilde{\sigma}\sum_{i=1}^K(1 - \alpha)^{i-1}}{\Tilde{\rho}} , & \textrm{otherwise.}
    \end{cases}
\end{align}
with $\tilde{\sigma}$ determined for each $\delta_p$ via Monte Carlo sampling. 

In Fig. \ref{fig: probbound2}, the probabilistic bound is illustrated for the values $(\chi_p^*,\delta_p^*)$ obtained using \eqref{eq: optdelta} for Gait 2. To verify the probabilistic bound, Monte Carlo sampling was implemented to estimate the true probability that the system remains within $\tilde{\rho}$ after $K=10$ steps. We simulate the system for this horizon and report both the fraction of trajectories remaining stable, as well as the fraction of trajectories remaining in the Lyapunov sublevel set $\tilde{\rho}.$ 


\section{Conclusion}
In this paper, we introduced the notion of input-to-state stability in probability (ISSp), which generalizes ISS for discrete-time systems with unbounded stochastic disturbances. We provided Lyapunov conditions for exponential ISSp, drew connections between ISS, ISSp, and traditional stability notions for stochastic systems, and provided simulation studies of ISSp systems (including an LQG system and bipedal walker) where we provide practical, probabilistic stability guarantees for systems subject to random disturbances. 

This work opens numerous directions for future work. In particular, while the martingale-based tools used in this paper require only a simple expected drift condition \eqref{eq:iss_lyap} on the Lyapunov function, martingale-based probability bounds are typically considered quite weak. Thus, for practitioners, an important question is how to choose Lyapunov function $V,$ and the supermartingale $W,$ to obtain the tightest bound possible (see also sum-of-squares-based approaches \cite{steinhardt_finite-time_2012, santoyo_barrier_2021}). It is also likely that tighter bounds may be obtained by exploiting particular structure in the disturbance distribution and dynamics.

Another direction of interest is to investigate how stability notions degrade when the system is subject to state uncertainty (as compared to the process noise considered in this work). Like for ISSp, we would expect these stability guarantees to hold only in probability; it remains an open question whether a similar property to the ``smooth degredation'' observed under ISS and ISSp can be found for systems with uncertain state. 

Lastly, the application of ISSp to bipedal locomotion motivates its use for studying the robustness of periodic walking gaits to uncertain terrain. Future work in this area includes applying the notion of ISSp to the gait synthesis framework to systematically generate nominal walking trajectories that have probabilistic guarantees of robustness for reasonable estimates of uncertain terrain.

\bibliography{citations,cosner}
\bibliographystyle{ieeetr}
\cp 

\appendix
\subsection{Lemma for Lyapunov Conditions}
\label{lemma:power_split}
We use the following lemma to prove the sufficiency of the ISSp Lyapunov conditions in Theorem \ref{thm:lyap}. 

\begin{lemma}
For $x_1, x_2, p > 0$, there exists $\zeta > 0$ such that $(x_1^p + x_2^p)^\frac{1}{p} \leq \zeta(x_1 + x_2).$
\end{lemma}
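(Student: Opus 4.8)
The plan is to prove the inequality with the explicit constant $\zeta = 2^{1/p}$, which works uniformly for all $p > 0$. The key observation is that, since the map $t \mapsto t^{1/p}$ is strictly increasing on $[0,\infty)$ for any $p > 0$, it suffices to establish the equivalent root-free inequality $x_1^p + x_2^p \leq 2(x_1 + x_2)^p$; raising both sides to the power $1/p$ then yields $(x_1^p + x_2^p)^{1/p} \leq 2^{1/p}(x_1 + x_2)$ directly.

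To bound $x_1^p + x_2^p$, I would first use the elementary fact $x_1^p + x_2^p \leq 2\max\{x_1^p, x_2^p\} = 2\big(\max\{x_1, x_2\}\big)^p$, where the last equality uses that $t \mapsto t^p$ is increasing on $[0,\infty)$. Since $x_1, x_2 > 0$, we have $\max\{x_1, x_2\} \leq x_1 + x_2$, and therefore $\big(\max\{x_1, x_2\}\big)^p \leq (x_1 + x_2)^p$. Chaining these gives $x_1^p + x_2^p \leq 2(x_1 + x_2)^p$, which is exactly what is needed, and $\zeta = 2^{1/p} > 0$ as required.

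I do not anticipate a genuine obstacle; the only point requiring care is that the claim must hold for all $p > 0$, including the range $0 < p < 1$ where the familiar monotonicity of $\ell^p$-norms fails and $\zeta = 1$ would be insufficient. The $\max$-based argument sidesteps this entirely, since it never invokes subadditivity of $t \mapsto t^p$. As an alternative (and as a sanity check on the constant), one could argue by homogeneity: the ratio $(x_1^p + x_2^p)^{1/p}/(x_1 + x_2)$ is invariant under the scaling $(x_1, x_2) \mapsto (s x_1, s x_2)$, so it suffices to maximize the continuous numerator over the compact simplex $\{x_1 + x_2 = 1, \; x_1, x_2 \geq 0\}$; this maximum is finite and strictly positive, furnishing a valid $\zeta$ (attained at the center $x_1 = x_2 = 1/2$ when $p \leq 1$, yielding the sharp value $2^{1/p - 1}$, and at a corner when $p \geq 1$, yielding $1$).
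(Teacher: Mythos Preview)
Your proof is correct and in fact cleaner than the paper's. The paper splits into two cases: for $p \geq 1$ it invokes equivalence of $\ell_p$ norms on $\mathbb{R}^2$ to assert existence of $\zeta$, and for $0 < p < 1$ it uses concavity of $t \mapsto t^p$ (a Jensen-type midpoint inequality) to obtain the explicit constant $\zeta = 2^{1/p-1}$. Your $\max$-based argument avoids both the case split and the appeal to norm equivalence or concavity, and yields the single constant $\zeta = 2^{1/p}$ uniformly in $p$. The trade-off is that the paper's constants are sharper in each regime (indeed $\zeta = 1$ suffices for $p \geq 1$ and $\zeta = 2^{1/p-1}$ for $p < 1$, exactly the values your homogeneity aside identifies), but since the lemma only asks for \emph{some} $\zeta > 0$, your approach is entirely adequate and arguably preferable for its simplicity and self-containment.
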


\begin{proof}

\textbf{Case 1:} Suppose $p \geq 1.$ Then $(\lvert x_1\rvert^p + \lvert x_2\rvert^p)^\frac{1}{p} \triangleq \norm{\mb{x}}_p$ defines the $\ell_p$ norm for $\mb{x} \triangleq [x_1, x_2]^T$ on $\mathbb{R}^2.$ Since $\ell_p$ norms are equivalent \cite{boyd_convex_2004}, there exists $\zeta > 0$ such that $\norm{\mb{x}}_p \leq \zeta \norm{\mb{x}}_1 = \zeta(\lvert x_1 \rvert + \lvert x_2 \rvert).$  The result follows since $x_1, x_2 > 0.$

\textbf{Case 2:} Suppose $0 < p < 1$. Then, $x^p$ is a concave function \cite{boyd_convex_2004}, and
$\left(\frac{x_1^p + x_2^p}{2} \right) \leq \left(\frac{x_1 + x_2}{2}\right)^p.$ Since $x^{\frac{1}{p}}$ is an increasing function, exponentiating both sides preserves ordering,
$\left(\frac{x_1^p + x_2^p}{2} \right)^\frac{1}{p} \leq \frac{x_1 + x_2}{2},$
and thus we have
$(x_1^p + x_2^p)^\frac{1}{p} \leq 2^{\frac{1}{p}-1} (x_1 + x_2) \triangleq \zeta (x_1 + x_2),
$
with $\zeta > 0$ as needed.
\end{proof}


\end{document}